\newtheorem*{rep@theorem}{\rep@title}
\newcommand{\newreptheorem}[2]{%
\newenvironment{rep#1}[1]{%
 \def\rep@title{#2 \ref{##1}}%
 \begin{rep@theorem}}%
 {\end{rep@theorem}}}
\newtheorem{theorem}{Theorem}
\newtheorem{lemma}{Lemma}
\theoremstyle{definition}
\newtoks\magicAppendix
\newtoks\magictoks
\newif\iflater
\long\def\later#1{\magictoks={#1}%
  \edef\magictodo{\noexpand\magicAppendix={\the\magicAppendix \par
    \noexpand\setcounter{theorem}{\arabic{theorem}}%
    \the\magictoks}}%
  \magictodo}
\long\def\both#1{\later{#1}\the\magictoks}
\def\magicappendix{\latertrue \the\magicAppendix}
\newbox\gnBoxA
\newdimen\gnCornerHgt
\newdimen\gnArgHgt
\def\q #1{%
\setbox\gnBoxA=\hbox{$#1$}%
\gnArgHgt=\ht\gnBoxA%
\ifnum     \gnArgHgt<\gnCornerHgt \gnArgHgt=0pt%
\else \advance \gnArgHgt by -\gnCornerHgt%
\fi \raise\gnArgHgt\hbox{$\ulcorner$} \box\gnBoxA %
\raise\gnArgHgt\hbox{$\urcorner$}}
\newcommand{\argmax}{\operatornamewithlimits{argmax}}
\newcommand{\logdet}[1]{\log \det \of{#1}}
\renewcommand{\O}[1]{\mathcal{O}\of{#1}}
\newcommand{\of}[1]{\left(#1\right)}
\renewcommand{\b}[1]{\left\{#1\right\}}
\newcommand{\logo}[1]{\log\of{#1}}
\newcommand{\sumsub}[2]{\sum_{\substack{#1 \\ #2}}}
\renewcommand{\P}[1]{\mathbb{P}\of{#1}}
\newcommand{\E}[1]{\mathbb{E}\left[#1\right]}
\newcommand{\Ec}[2]{\mathbb{E}\left[#1\middle\vert #2\right]}
\newcommand{\abs}[1]{\left\vert#1\right\vert}
\renewcommand{\epsilon}{\varepsilon}
\newcommand{\users}{[N]}
\newcommand{\REGRET}{\O{\rho \sqrt[3]{NT^2}}}
\newcommand{\PtH}{p^{\leq T}\of{H}}
\newcommand{\OPTtH}{\text{OPT}^{\leq T}\of{H}}
\newcommand{\OPTH}{\OPTtH}
\newcommand{\xot}{x_0^t}
\newcommand{\xit}{x_1^t}
\newcommand{\pt}{p^t}
\newcommand{\st}{s^t}
\begin{document}
\title{Provably Manipulation-Resistant Reputation Systems}
\author{ Paul Christiano\thanks{UC Berkeley. \ Email: paulfchristiano@eecs.berkeley.edu}}
\date{}
\maketitle

\begin{abstract}

We consider a community of users
who must make periodic decisions about whether
to interact with one another.
We propose a protocol which allows honest users to 
reliably interact with each other,
while limiting the damage done by each
malicious or incompetent user.
The worst-case cost per user is sublinear in the average number
of interactions per user and is independent of the number of users.
Our guarantee holds simultaneously for every group of honest
users. For example, multiple groups of users with incompatible
tastes or preferences can coexist.

As a motivating example, we consider a game where
players have periodic opportunities to do one another favors
but minimal ability to determine when a favor was done.
In this setting, our protocol achieves nearly optimal collective welfare
while remaining resistant to exploitation.

Our results also apply to a collaborative filtering
setting where users must make periodic decisions about
whether to interact with resources such as movies or restaurants.
In this setting, we guarantee that any set of honest users
achieves a payoff nearly as good as if they had identified
the optimal set of items in advance
and then chosen to interact only with resources
from that set.

\end{abstract}

\section{Introduction}\label{intro}

Trust and reciprocity
distinguish repeated interactions
within a small village from isolated interactions
in a massive online community.
In a small group we are willing to trust each other and expect
that ``what goes around comes around,''
but online we are more guarded:
we rely on trusted mediators,
expect others to do the bare minimum they have promised,
and rarely leave ourselves vulnerable to exploitation.
Can the advantages of trust within a small group
be realized at a global scale, in an environment
where repeat interactions are rare
and manipulation will be rampant if it is profitable?

We help address this question by showing
that a subset of individuals who intend to trust and help each other
can reap the benefits even in
the presence of a large majority of exploitative manipulators.
This result continues to hold when
feedback is extremely noisy,
individuals engage in only a handful of interactions,
and the manipulators coordinate adversarially.
Our per-user performance guarantees are completely independent of the size of the 
community.

%
We use online learning over positive semidefinite matrices to iteratively adjust
a symmetric measure of trust.
This symmetric measure is coupled with a non-trivial system
for tracking reciprocal obligations.
The result is a protocol which guarantees that the honest users perform
nearly as well as if they had simply excluded all dishonest users.

\subsection{Motivating example}\label{example}

Consider a set of $N$ users with periodic opportunities to do one another \emph{favors}.
In each round, two users Alice and Bob are chosen at random;
Alice may choose to a pay a cost of $1$,
and if she does then Bob receives an expected benefit of $1 + \epsilon$.

If the users were maximizing collective welfare
then they would always take these opportunities.
After $kN$ rounds the average user's payoff would be $\epsilon k$.
But such an altruistic community is open to exploitation by defectors,
who receive favors but don't reliably pay them back.
Is there an altruistic strategy which can achieve high collective welfare
but isn't vulnerable to such exploitation?

We assume that a favor confers an \emph{expected} benefit of $1 + \epsilon$,
but that the actual benefit is either $0$ or $\rho$.
In this case three distinct outcomes are indistinguishable to most users:
\begin{itemize}
\item Alice does Bob a favor, but by chance the benefit is $0$.
\item Alice claims to have done a favor for Bob, though in fact she did not.
\item Bob claims not to have received a favor, though in fact he did.
\end{itemize}

Suppose some set of $\alpha N$ ``honest'' users is willing to follow whatever protocol
we specify.
If they knew each other's identities in advance, then after $kN$
rounds these users could achieve an average benefit
of $\alpha \epsilon k$ 
by always doing favors for each other and never doing unreciprocated favors for outsiders.
We show how to obtain a result nearly this good without knowing the identities of the honest users in advance.
Namely, we show that there is a protocol which allows the average honest user a payoff
of $\alpha \epsilon k  - \O{ \frac {\rho k^{2/3}}{\alpha} }$ after $k N$ rounds,
\emph{regardless} of how the dishonest users behave or how many of them there are.

\section{The model}

Consider a set of $N$ users
and a sequence of rounds, $t = 0, 1, \ldots$.
In each round, a pair of users $x_{0}^t, x_{1}^t \in \users$ is chosen by nature,
along with a pair of payoffs $p_0^t, p_1^t \in [-\rho, \rho]$.
All users learn the values $\xot, \xit$.
The users $x_i^t$ choose whether to interact, picking $s_i^t \in \b{0, 1}$.
If either user picks $0$ then they don't interact
and both receive a payoff of $0$.
If both users pick $1$ then they do interact and $x_i$ receives a payoff $p_i^t$.
Each user learns the payoff they receive
but not what payoff their partner received.

For the analysis, we fix a set $H \subset \users$ of ``honest'' users
who will follow whatever protocol we prescribe and whose welfare we will seek to maximize.
Our guarantees will simultaneously hold for any set $H$
whose members follow our protocol.
We will assume that the users have access to unlimited cheap talk
(in the form of broadcasts),
verifiable identities,
and a shared random beacon.\footnote{The assumptions 
of verifiable identities, broadcast communication, and a random beacon
can be replaced by the use of signature schemes and coin flipping protocols
or by the availability of a trusted central party.}

\newcommand{\Ptx}{p^{\leq T}\of{x}}
The \emph{payoff} of a user $\Ptx$ is the sum of their payoffs in rounds
they participated in:
\[ \Ptx = \sumsub{i, t \leq T: x_i^t = x}{s_0^t = s_1^t = 1} p_i^t.\]
Let $\PtH$ be the total payoff of the honest users after $T$ rounds, i.e.
\[ \PtH = \sum_{x \in H} \Ptx. \]
We define the benchmark $\OPTtH$ to be the counterfactual payoff of the honest users
if they had always interacted with each other and only with each other:
\begin{align*}
\OPTtH = \sumsub{t \leq T}{x_{0}^t, x_{1}^t \in H } \of{ p_0^t + p_1^t }.
\end{align*}
In some of our results we include the possibility of \emph{transfers}.
If transfers are available, player $x_i^t$ chooses a payment $\tau_i^t \geq 0$
to make to their partner at the end of round $t$.
Player $x_0^t$ then receives an additional payoff of $\tau_1^t - \tau_0^t$
and player $x_1^t$ receives an additional payoff of $\tau_0^t - \tau_1^t$.
These additional payoffs are included in the sum defining $\Ptx$.

Our framework only considers interactions between users,
but by introducing users who always agree to interact
and always receive a payoff of $0$
we can easily model static \emph{resources}
such as movies, news articles, or merchants who don't care
whom they interact with.
Some of our results concern the \emph{collaborative filtering} setting,
in which each interaction involves one resource and one user.
For notational convenience, we will assume that the resource is always $x_1^t$
and the user is always $x_0^t$.
In this setting the set $H$ contains both honest users and some set of resources.
To obtain the best bounds we will usually imagine that $H$ contains the ``optimal''
set of items according to the users in $H$, i.e. all items which yield
a positive payoff on average.

We have described the interactions as being proposed ``by nature''
but these interactions can also be proposed by the users.
Our protocol can be used as a black box within a larger system
designed to maximize $\OPTH$.\footnote{This could involve 
using machine learning or simple heuristics 
to predict which interactions will be positive, 
or identifying items
a user wants to buy and proposing a popular merchant who sells them,
or any number of alternatives. 

It also falls on the system to estimate
the opportunity cost of a mediocre interaction, as discussed in Section~\ref{results}.}

\section{The results}\label{results}

Our main result is the following:
\begin{theorem}\label{main-result}
There is a randomized polynomial time protocol
which yields expected performance
\[ \PtH \geq \OPTtH - \REGRET \]
for every set $H$ of honest users,
assuming that one of the following conditions is satisfied:
\begin{enumerate}
\item Transfers are available.
\item In every round $p_1^t = 0$. (The collaborative filtering setting.)
\item In every round with $\xot, \xit \in H$
the payoffs are symmetric \emph{ex ante}:
$\E{p_0^t} = \E{p_1^t}$,
conditioned on $x_0^t, x_1^t$, and all information available before round $t$.
\end{enumerate}
\end{theorem}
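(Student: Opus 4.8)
The plan is to recast the protocol as online linear optimization over the cone of symmetric positive semidefinite matrices, with the ``symmetric measure of trust'' being a matrix $M^t \succeq 0$ indexed by $\users$. I would identify the ideal comparator with the rank-one matrix $\mathbf{1}_H \mathbf{1}_H^\top$, whose $\of{x, y}$ entry is $1$ exactly when $x, y \in H$; interacting according to this matrix means the honest users interact with each other and with no one else. Encoding the round-$t$ interaction together with its total payoff as a symmetric gain matrix $G^t$ supported on the $\of{\xot, \xit}$ and $\of{\xit, \xot}$ entries with value proportional to $p_0^t + p_1^t$, the benchmark becomes $\sum_{t \leq T} \langle \mathbf{1}_H \mathbf{1}_H^\top, G^t \rangle = \OPTtH$, while a randomized protocol that interacts with probability an increasing function of $M^t_{\xot \xit}$ earns $\sum_{t \leq T} \langle M^t, G^t \rangle$ in expectation. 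Constraining $M$ to the feasible set $\b{M \succeq 0 : M_{xx} \leq 1}$ keeps $\mathbf{1}_H \mathbf{1}_H^\top$ admissible for \emph{every} $H$ simultaneously while bounding its trace by $\tr{\mathbf{1}_H \mathbf{1}_H^\top} = \abs{H} \leq N$.

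Next I would run matrix follow-the-regularized-leader (equivalently matrix mirror descent with a matrix-entropy or log-det regularizer) on the gain sequence; this is the standard engine for online semidefinite optimization, its projection step is an eigendecomposition so the protocol is polynomial time, and its regret scales with the trace bound and the magnitude of the gains rather than with $N^2$, giving a full-information term of order $\rho \sqrt{N T}$. The complication is that payoffs are observed only on interactions, and only by the user who receives them, so this is bandit rather than full-information feedback. I would therefore interact on a fresh random fraction of rounds purely to explore, use inverse-probability weighting to build an unbiased estimator $\widehat{G}^t$ of $G^t$, and feed $\widehat{G}^t$ to the update. Raising the exploration rate $\eta$ shrinks the variance of $\widehat{G}^t$, turning the regret into a term of order $\rho \sqrt{N T / \eta}$, but directly costs $\O{\eta \rho T}$ in forgone or adversarially negative payoff; balancing these two terms forces $\eta \sim \of{N/T}^{1/3}$ and produces the characteristic cube-root rate $\REGRET$ in place of the usual $\sqrt{T}$.

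The step I expect to be the main obstacle — and the reason for the three hypotheses and the reciprocal-obligations machinery — is constructing a \emph{symmetric} unbiased estimator of $G^t$ from a single, one-sided observation. The trust matrix is symmetric, so the update needs the symmetric quantity $p_0^t + p_1^t$, yet user $\xot$ sees only $p_0^t$ and user $\xit$ only $p_1^t$, and the three indistinguishable outcomes from the motivating example mean neither can verify the other's report. Each condition resolves exactly this: with transfers the users can settle their realized payoffs so that the symmetric update is implementable and self-enforcing; when $p_1^t = 0$ the user's own observation already determines the total; and when $\E{p_0^t} = \E{p_1^t}$ the one-sided observation, doubled, is unbiased for $p_0^t + p_1^t$. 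Verifying in each case that the estimator is unbiased, bounded in the relevant matrix norm, and compatible with an obligation-tracking scheme that prevents a manipulator from extracting value faster than honest users accrue trust is the technical heart of the argument; once that is in place, the regret and exploration bookkeeping follows routinely from the online-SDP regret guarantee and the comparator identity $\sum_{t \leq T} \langle \mathbf{1}_H \mathbf{1}_H^\top, G^t \rangle = \OPTtH$, yielding $\PtH \geq \OPTtH - \REGRET$.
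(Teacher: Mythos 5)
Your proposal founders on the central difficulty the paper is built around: the simultaneous-for-every-$H$ guarantee cannot be obtained by running online SDP directly on a trust matrix $M^t$ with comparator $\mathbf{1}_H \mathbf{1}_H^\top$. The regret bound you invoke controls the learner's \emph{internal} ledger $\sum_{t \leq T} \left\langle M^t, G^t \right\rangle$, which sums over \emph{all} rounds, including interactions between two users outside $H$ whose reported payoffs are adversarial. Dishonest users can fabricate large positive payoffs among themselves, so the learner ``collects'' its regret bound on entries of $M^t$ outside $H \times H$ while the honest users receive nothing; the inequality $\sum_t \left\langle M^t, G^t \right\rangle \geq \OPTtH - \REGRET$ then says nothing about $\PtH$, which only accrues in rounds touching $H$. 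What is actually needed is a lower bound on $\sum_{t \leq T : x_i^t \in H} s^t p^t$ for every $H$ (and each side $i$) simultaneously, and a vanilla comparator-regret statement cannot be localized to those rounds. This is precisely why the paper's planner does \emph{not} let the learned matrix determine interactions directly: the learner predicts cut/join adjustments, $P^{t+1}_{ab} = (1 - C_{ab} - J_{ab}) P^t_{ab} + J_{ab}$, and the comparator $X^H$ constructed in Lemma~\ref{improvementconversion} leaves every row $a \notin H$ of $P^t$ \emph{unchanged}, so the regret comparison involves only rounds with a participant in $H$ and converts into a bound on $\OPTtH - p^{\leq T}_i\of{H}$ for each side separately. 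The paper explicitly reports that direct approaches with richer comparison classes, of the kind you propose, were unable to achieve this guarantee. (Relatedly, your account of the cube-root rate is not the paper's: the floor $J^t_{ab} \geq \epsilon^{1/2}$ does force $s^t \geq \epsilon^{1/2}$, but the binding constraint is the slow-change analysis of $P^t$ in Lemma~\ref{slowchange}, balancing $\rho \epsilon^{1/2} T$ against $\rho N / \epsilon$, not an exploration-variance tradeoff.)

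Your treatment of condition 3 has a second gap of the same character. The claim that the doubled one-sided observation is unbiased for $p_0^t + p_1^t$ uses $\E{p_0^t} = \E{p_1^t}$, which the hypothesis grants only when \emph{both} participants are honest; in honest-dishonest rounds the counterparty misreports arbitrarily, and nothing in your scheme bounds the cumulative value a manipulator extracts through misreporting over $T$ rounds --- you name ``an obligation-tracking scheme that prevents a manipulator from extracting value'' as a requirement, but that mechanism is exactly the missing content, not routine bookkeeping. The paper's solution (Section~\ref{exantereduction}) is a conserved currency: transfers $\tau_i^t$ equalize the effective payoffs $q_i^t = p_i^t + \frac{2\rho}{\delta} \frac{\tau_i^t}{w_i^t}$ (payoff plus change in \emph{log} wealth), and the potential function $U^T\of{x} = \frac{2\rho}{\delta} \logo{w^T\of{x}} + p^{<T}\of{x}$ together with conservation of $\sum_x w\of{x}$ and the budget constraint $w > 0$ caps every user's lifetime deficit, letting Lemmas~\ref{exantereduction1}--\ref{exantereduction3} convert the symmetric-instance guarantee back into a bound on the actual $\PtH$ at an extra $\O{\rho \sqrt{NT}}$ cost. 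The paper's footnote on hoarding and imbalance-of-payments attacks documents how estimator-style schemes without this self-correcting wealth dynamic break, even in the simple favor setting.
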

Note that the favor setting described in Section~\ref{example}
satisfies this \emph{ex ante} symmetry condition,
provided that Alice has an opportunity to do Bob a favor
with the same probability that Bob has an opportunity to do Alice a favor.\footnote{Some
condition of this form is clearly necessary. For example,
suppose that Alice has an opportunity to do Bob a favor in \emph{every} round.
If Alice does not take these opportunities, then
we have failed to maximize collective welfare.
But if Alice does take these opportunities, then we have failed to maximize Alice's welfare.
Since we need to satisfy the theorem for both $H = \b{\textrm{Alice}}$
and $H = \b{\textrm{Alice}, \textrm{Bob}}$, either way we are in trouble.}

If we hold fixed the ratio $k = T/N$,
then the regret is $\O{\rho N k^{2/3}}$.
So the introduction of each dishonest user increases the total regret by $\O{\rho k^{2/3}}$
over the first $k$ interactions.\footnote{
The addition of an honest user also increases the regret by $\O{\rho k^{2/3}}$,
but typically decreases the \emph{average} regret per honest user.
If an $\alpha$ fraction of users are honest,
then the average regret per honest user is $\O{\frac {\rho k^{2/3}}{\alpha}}$.}

If we also hold fixed the ratio $\alpha = \abs{H}/N = \alpha$,
then the per-user regret bound is \emph{independent} of $N$.
In massive communities, this requires two interacting users to make informed
decisions even if they are connected only
by a long and tenuous chain of mutual acquaintances.

We also prove the following lower bound on the achievable regret:
\begin{reptheorem}{lower-bound}
For every protocol and every $T$,
there exists a set of payoffs and a set $H$
such that
\[ \PtH \leq \OPTtH - \Omega\of{ \rho \sqrt{N T} }\]
in expectation,
even assuming that $p_0^t = p_1^t$ in every interaction.
\end{reptheorem}
That is, over the first $k$ interactions it is necessary
for the honest users to collectively
pay a cost of $\Omega\of{\rho k^{1/2}}$ per dishonest user
just to get enough information to statistically identify them.

There is a tension between choosing a smaller set $H$---for which
the optimum payoff may be higher---and choosing a larger
set $H$ in order to amortize the loss $\REGRET$ over a larger set.
For collaborative filtering in particular,
recommendations optimized for a smaller group of users will be strictly better.
This tension is only in the analysis, however,
since the guarantee applies simultaneously
for every group $H$.
As $T$ grows and the term $\REGRET$ becomes less important,
the best analysis might obtain for smaller and smaller groups $H$,
and to compete the algorithm must make more and more finely-tuned recommendations.

We note that our model differs from many other approaches to reputation systems
or collaborative recommendation;
our protocol merely \emph{filters} proposed interactions, rather than recommending
optimal interactions. It is up to the users of the system to propose interactions,
and also to estimate the opportunity cost of mediocre interactions
(we assume that any interaction with a positive payoff was good,
so $0$ needs to represent the ``indifferent'' outcome; watching a good movie
should have a negative payoff, if I would otherwise watch a great movie).

In some settings this seems like an acceptable model. For example,
I might browse an online marketplace, find an item that looks like a good deal,
and then use the merchant's reputation to decide whether to purchase it.
Or I might see a movie which looks interesting
and use reviews to decide whether to watch it.

In other settings this is a very inappropriate model. For example,
I might have the choice between hundreds of merchants offering the same item.
In this case I just want to be recommended the best one.
Our results would need to be strengthened to be applicable in such settings.

See Section~\ref{further-work} 
for a more extensive discussion of modeling issues, limitations, and directions
for further work.
Our protocol makes a qualitatively novel guarantee, but there is ample
room for improvement.
For example, our algorithm is not distributed,
the protocol is not incentive-compatible for honest users,
and we only consider sequential interactions.

\section{The protocol}

Our main algorithmic contribution is an online learning algorithm
which could be used by a central planner to decide which pairs of users should interact.
The primary difficulty is that the set $H$ is not known in advance
and so we need to simultaneously lower bound the total payoff
of \emph{every} set $H$.
This is an unusual guarantee, and requires a novel technical approach.
This algorithm is outlined in Section~\ref{planneroverview}
and formally analyzed in Section~\ref{centralplanner}.

This algorithm can then be used directly to achieve our guarantees
when $p_1^t = 0$ or when $p_0^t = p_1^t$.
If transfers are available, we reduce to the case $p_0^t = p_1^t$
by using transfers to equalize the two payoffs.
These reductions are routine,
and are analyzed in Section~\ref{simplereductions}.

The most challenging condition in Theorem~\ref{main-result} is \emph{ex ante} symmetry.
To address this case we introduce a digital currency, and use transfers
of this currency to equalize payoffs.
The primary difficulty is ensuring that there is sufficient
liquidity to facilitate transactions between honest users.
We propose a robust protocol based on logarithmic utility functions.
This system is outlined in Section~\ref{currencyoverview}
and formally analyzed in Section~\ref{exantereduction}.

\subsection{Outline of the central planner's algorithm}\label{planneroverview}

Consider the following single-payoff problem faced by a central planner:
in each round a pair of individuals $\xot, \xit \in \users$ is given,
the planner outputs $\st \in [0, 1]$ representing the probability
with which $\xot$ and $\xit$ should interact,
and then nature reveals a payoff $\pt$ for the interaction.

\newcommand{\PtHi}{p^{\leq T}_i\of{H}}
\newcommand{\OPTHi}{\OPTH}
There are two different payoffs associated with a set $H$ in this game,
depending on whether we think of the payoff as accruing to player $0$
or player $1$:
\[ \PtHi = \sum_{\substack{t \leq T:x_i^t \in H}} \st \pt. \]
We can create a symmetric version, in which the payoff
is shared equally between the two players,
by averaging $p^{\leq T}_0$ and $p^{\leq T}_1$.
All of these different payoff notions are associated with the same benchmark:
\[ \OPTHi = \sum_{\substack{t \leq T : \xot, \xit \in H}} \pt. \]

\newcommand{\plannersalg}{\mathrm{PLAN}}
We exhibit an algorithm $\plannersalg$ satisfying:
\begin{reptheorem}{thm:centralplanner}
Suppose that $p^t s^t \in [-\rho, \rho]$
and that $\E{p^t} \in [-\rho, \rho]$ conditioned on $x_0^t, x_1^t$ and the information
from all previous rounds.
Then for every $H, T$ and $i$, $\plannersalg$ achieves payoffs satisfying
\begin{align*}
\PtHi &\geq \OPTHi - \REGRET \\
\end{align*}
in expectation.
\end{reptheorem}
This protocol can be used almost directly to achieve
low regret when $p_1^t = 0$ (by applying the lower bound on $p_0^{\leq T}\of{H}$)
or when $p_0^t = p_1^t$ (by applying the lower bound on $p^{\leq T}\of{H} = p_0^{\leq T}\of{H} + p_1^{\leq T}\of{H}$).

The basic outline of our algorithm  $\plannersalg$ is:
\begin{itemize}
\item The planner initializes a matrix $P^t_{ab}$, indexed by $a, b \in \users$,
with $P^0_{ab} = 1$.
\item In step $t$, the planner outputs $s^t = P^t_{\xot\xit}$.
\item After step $t$, the planner uses an online learning
algorithm to predict matrices $C^{t+1}_{ab}$ (for \emph{cut}) and $J^{t+1}_{ab}$
(for \emph{join}).
\item The planner then updates 
\begin{align*}
P^{t+1}_{ab} &= P^t_{ab} - C^{t+1}_{ab} P^t_{ab} + J^{t+1}_{ab} (1 - P^t_{ab}) \\
&= (1 - J^{t+1}_{ab} - C^{t+1}_{ab}) P^t_{ab} + J^{t+1}_{ab}
\end{align*}
The matrices $C^{t+1}_{ab}$ and $J^{t+1}_{ab}$ are chosen by the online learning algorithm
with the goal of maximizing $p^{t+1} s^{t+1}$.
\end{itemize}
$J^t$ and $C^t$ are predicted by the online local learning algorithm
from \cite{me}, which achieves payoffs as high
as the best fixed matrices of the form
\begin{align*}
J_{ab} &= \P{\ell(a) = \ell(b) = 0} \\
C_{ab} &= \P{\of{\ell(a) = 0,  \ell(b) = 1} \vee \of{\ell(a) = -1, \ell(b) = 0}}
\end{align*}
for any distribution over labelings $\ell : \users \rightarrow \b{-1, 0, 1}$.

The most counterintuitive feature of the algorithm is that the predicted
matrices are used to adjust $P$, rather than to directly
determine which users should interact.
This allows the protocol to build up more complex patterns
of interaction as time goes on,
and is needed to minimize regret
for \emph{every} set $H$ simultaneously.
Other approaches we tried, such as using an online learning
algorithm with a richer comparison class,
were unable to achieve this guarantee.

The analysis proceeds in several steps:
\begin{itemize}
\item The online learning guarantee implies that the matrices $P^{t}_{ab}$
perform nearly as well as any sequence of the form
\[ Q^{t}_{ab}(J^*, C^*) = (1 - J^*_{ab} - C^*_{ab}) P^{t-1}_{ab} + J^*_{ab}, \]
where $J^*$ and $R^*$ are constant matrices from the comparison class.
That is, the recommendations $J^t$ and $C^t$ are nearly optimal
within this class.
\item The strong smoothness lemma for the online learning algorithm
implies that $J^t_{ab}$ and $C^t_{ab}$ change slowly.
Using this fact and a non-trivial analysis of our update rule
we prove that $\abs{P^t_{ab} - P^{t-1}_{ab}}$
is small on average.
This implies that the matrices $Q^{t}$ perform nearly as well as
the matrices
\[ R^{t}_{ab}(J^*, C^*) = (1 - J^*_{ab} - C^*_{ab}) P^{t}_{ab} + J^*_{ab}. \]
(The superscript on $P$ has changed from $t-1$ to $t$;
this difference may appear small, but it leads to a much harder
analysis and ultimately to our suboptimal dependence on $T$.)
\item We show that if $p^{\leq T}_0\of{H} \ll \OPTH$, then
there is some pair of matrices $J^*, C^*$ in the comparison
class such that the matrices $R^{t}(J^*, C^*)$ substantially outperform
the matrices $P^{t}$. 
Namely, we can take $J^*_{ab} = 1$ for $a, b \in H$,
and $C^*_{ab} = 1$ for $a \in H, b \not \in H$.
Similarly,  
if $p^{\leq T}_1\of{H} \ll \OPTH$ then
we can take $C^*_{ab} = 1$ for $a \not \in H, b \in H$
\item (1) and (2) together imply that the matrices $R^{t}(J^*, C^*)$ cannot much outperform $P^{t}$
for any $J^*, C^*$ in the comparison class.
Combined with (3), this implies that we can't have $\PtHi \ll \OPTH$.
\end{itemize}

Surprisingly, the bottleneck
in the quantitative analysis is actually in step (2);\footnote{though if this 
bottleneck were addressed,
the rest of our analysis would need to be tightened in order
to actually improve the results}
we take a small learning rate
to control the differences $\abs{P^{t+1}_{ab} - P^t_{ab}}$.
This is not just a technical requirement.
Because the central planner's updates are both optimal and small,
we can infer that there are no big adjustments to be made.
This property is what lets us ensure 
that the central planner is making adjustments often enough
to ``exhaust'' all of the good ones.

\subsection{Protocol for \emph{ex ante} symmetric interactions}\label{currencyoverview}

Recall the \emph{ex ante} symmetry condition:
\[ \Ec{p_0^t - p_1^t}{x_0^t, x_1^t, K^{<t}} = 0 \]
where $K^{<t}$ is a transcript of all of the events prior to round $t$.
In this section we describe a protocol which achieves
low regret under this condition.

Our approach is to assign each user a level of wealth $w(x) > 0$.
A high value of $w(x)$ indicates a debt owed to a user by the rest of the community,
while a low value indicates a debt owed \emph{by} the user.
Whenever $p_0^t \neq p_1^t$, we adjust $w\of{\xot}$ and $w\of{\xit}$
so that the ``effective payoffs'' of the two users are equal
and the total wealth $w\of{\xot} + w\of{\xit}$ is conserved.
Because the effective payoffs are symmetric,
we can apply the central planner's algorithm to guarantee
that the total effective payoff is large.
Over the long-run the budget constraints $w\of{x} > 0$
imply that the actual payoffs can't be too different
from the effective payoffs.

We first considered various straightforward approaches
in which payoffs were directly translated into changes in wealth,
i.e. which equalized the effective payoffs
\[ p_0^t + \Delta w\of{\xot} = p_1^t + \Delta w\of{\xit}. \]
The difficulty with these proposals was maintaining the invariant $w(x) > 0$.
In order to prevent a violation, we need to prevent interactions with users who have low $w(x)$.
But this means that if insufficient liquidity is available, interactions between
honest users will stop.
This issue made it very difficult to analyze protocols of this type,
and in most cases were able to find concrete situations in which
these protocols performed poorly,\footnote{Manipulators can break
many systems by hoarding currency, either by doing too many nice things or by slightly
understating their own payoffs. This creates a liquidity problem
amongst the honest users, which is difficult to correct without creating enough currency
for the dishonest users to run up destructively large deficits. 
If this attempted manipulation fails, 
manipulators can slightly understate their own payoffs when interacting with some honest users
and slightly overstate them when interacting with others, creating an imbalance
of payments among the honest users which has a similar effect.}
even in the simple favor setting described in Section~\ref{example}.

Instead we define the effective payoff
as the actual payoff plus the change in \emph{log} wealth,
and equalize this quantity:
\[ p_0^t + \Delta \logo{w\of{\xot}} = p_1^t + \Delta \logo{w\of{\xit}}. \]
This automatically leads our learning protocol to place less weight on the payoffs of users
who run up substantial deficits, preventing them from causing too much damage.
At the same time, we never have to block an interaction between honest users due to insufficient funds;
as $w(x) \rightarrow 0$ the system places a higher and higher value on changes in $w(x)$,
which prevent it from ever setting $w(x) \leq 0$.
The resulting protocol seems to be self-correcting
and highly robust to liquidity shortages or imbalances of payments.

\section{Related work}\label{related}

In this section we discuss a number of related threads in the literature.
Our novel contribution is a performance guarantee
in the face of general attacks and in a general environment,
rather than considering narrower threat models
or making restrictive assumptions about the domain.

\textbf{Collaborative filtering:} There is a large literature on collaborative filtering
spanning a wide range of techniques; see \cite{survey} for an overview.

Technically the most closely related results
are matrix prediction algorithms
that achieve low regret compared to the comparison class of matrices
with small trace norm,
and in particular the recent results for online matrix prediction
due to Hazan, Kale, and Shalev-Shwartz \cite{matrix-prediction}
and the similar work by the author \cite{me}.
However, filtering systems
based directly on these algorithms would be vulnerable to manipulation.

Another line of related work
investigates the recovery of preference functions;
for example see \cite{tellme, strangers, recommenders}.
Some of this work uses assumptions similar to our implicit assumptions, 
for example the existence of large groups with concordant preferences.
This line of work differs in that it aims to reconstruct the entire preference table
rather than simply filtering proposed interactions.
This stronger goal is unattainable in our noisy and adversarial setting.

\textbf{Competitive collaborative learning:} From a modeling perspective, our work is most 
similar to, and inspired by,
the competitive collaborative learning framework of Awerbuch and Kleinberg \cite{ccl}.
In this setting each user must choose a resource in a series of rounds,
and the aim is to ensure that the honest players perform as well as if they
had all chosen the single best fixed resource in every round.
Like us, they rely on machinery from online learning
and convert online learning guarantees into regret bounds
for the honest users.

However, this work focuses on identifying an optimal resource to use in every round,
rather than considering binary decisions across varying conditions.
As a result, we use quite different technical tools and are applicable
in very different settings.

\textbf{Transitive trust:} Another literature explicitly examines chains of trust
relationships, by which one user can indirectly trust another;
for example see \cite{trust1, trust2}.
The techniques in this literature tend to focus on a binary notion of links
as being either honest or dishonest.
This assumption is appropriate in some settings 
(for example with respect to cryptographic credentials)
but not in our target applications.

\textbf{Manipulation-resistance:} Another literature addresses the question of modifying recommendation systems
to limit the influence of sybils, fake identities controlled by an attacker. 
For example, see \cite{dsybil, influence-limiter}.
The robustness guarantees achieved in this literature are similar to our own,
but the results are restricted to the setting where a decision-maker
can get useful feedback from the same source across many different decisions.
In our setting
only a handful of users have information about each decision,
and so a decision-maker can only rarely
get direct advice from the same source more than once.
In exchange, we achieve weaker bounds on the damage done per sybil.

\textbf{Economic systems for P2P networks:} Many decentralized economic
protocols have been proposed to reduce the scope for free-riding
and enforce reciprocity, especially in the context of Peer-to-Peer (P2P) 
networks;
for some representative examples see
\cite{bittorrent, bartercast, karma, minaxi, distributedwork, sybilproofaccounting}.

Driven by our desire to obtain end-to-end performance guarantees,
we focus on robustly addressing liquidity constraints and possible payment
imbalances, issues which have received little theoretical attention.
These issues are particularly important because we assume that users
only sometimes have the opportunity to provide valuable services,
rather than assuming that such opportunities are always readily available.

Conversely, we do not confront any of the practical issues involved in tracking payments.
Our protocol
would need to be implemented on top of another system for managing digital payments.
We also rely on our symmetric reputation system to provide appropriate incentives
for honest reporting,
sidestepping another prominent issue in the existing literature.

\textbf{Other reputation systems:} Reputation systems have been the focus of substantial research
across a wide range of contexts,
for example see \cite{trust-survey, eigentrust, peertrust}.
P2P networks and online marketplaces have enjoyed the most attention.

To make a decision in these systems, a user
consults feedback from peers who have made similar decisions,
and trusts each source of feedback according to their past reliability.
This is conceptually similar to our framework,
although we use different technical tools and the resulting protocol
is much more resistant to manipulation.

It may be possible to bridge the gap between
our proposal and more practical reputation systems.
For example, EigenTrust \cite{eigentrust} is a reputation system which uses
random walks to compute the maximum eigenvector of an interaction graph.
Although it is vulnerable to manipulation, the protocol is much
more efficient than ours.
By changing the online learning algorithm used by the central planner
in our protocol,
it might be possible to replace our use of semidefinite programming
by a distributed algorithm based on random walks, achieving the best of both approaches.

\newpage
\bibliographystyle{plain}

\bibliography{reputation}

\section{Appendix: The central planner's algorithm}\label{centralplanner}

\subsection{The algorithm}

\newcommand{\myalg}{\mathcal{OLL}_{\epsilon}}
\newcommand{\myalgr}{\mathcal{OLL}_{\epsilon / \rho}}
\newcommand{\myalgo}[1]{\myalg\of{#1}}
\newcommand{\myalgro}[1]{\myalgr\of{#1}}
\newcommand{\mprod}[2]{\left\langle #1, #2 \right\rangle}

\newcommand{\Xspace}{\mathbb{R}^{3N \times 3N}}
We use the algorithm $\myalg$ for online local learning from \cite{me}.
We will apply that algorithm to labelings $\ell : \users \rightarrow \b{-1, 0, 1}$,
i.e. to $3N \times 3N$ matrices with rows and columns indexed by elements of 
$\users \times \b{-1, 0, 1}$.
We consider a sequence of matrices $E^t \in \Xspace$.
Write $E^{\leq T} = \sum_{t \leq T} E^t$ and similarly for $E^{< T}$.
Then $\myalgo{E^{< T}}$
is defined as the solution to the following semidefinite program:\footnote{The 
algorithm in \cite{me} maximized under some additional equality constraints,
e.g. $\sum_i X_{(a, i)(a, i)} = 1$.
For our purposes it doesn't matter whether we include these or not,
and for ease of exposition we omit them.
It is easy to verify that this leaves the results of \cite{me} unchanged;
using these extra inequalities would also leave our results unchanged.}
\[ \argmax_{ \substack{ X \in \Xspace \\
X \succeq 0 \\
0 \leq X_{(a, i)(b, j)} \leq 1}} \epsilon \mprod{X}{E^{<T}} + \logdet{X+I}\]
where $\mprod{X}{E}$ is the matrix inner product $\sum_{a,i,b,j} X_{(a,i)(b,j)}E_{(a,i)(b,j)}$.

For us, the important properties of $\myalg$ are given by the following results:
\begin{reptheorem}{oll}\cite{me}
For any sequence $E^t$ and any $X \succeq 0$ with entries in $[0, 1]$,
we have
\[ \sum_{t < T} \mprod{E^t}{\myalgo{E^{<t}}} \geq \sum_{t<T} \mprod{E^t}{X} - \O{\epsilon \sum_{t < T} \abs{E^t}_1^2 } - \O{N \epsilon^{-1}} \]
where $\abs{E^t}_1$ is the sum of the absolute values of the entries of $E^t$.
\end{reptheorem}
\begin{replemma}{smoothness}\cite{me}
For any $E^{< t}$ and any $E^t$ we have
\[ \abs{\myalgo{E^{<t}} - \myalgo{E^{\leq t}}}_{\infty} \leq \O{\epsilon \abs{E^t}_1} \]
\end{replemma}
Our statement of Theorem~\ref{oll} and Lemma~\ref{smoothness} are slightly different 
from the statements in \cite{me},
so we present proofs in Section~\ref{ollsection}. 

\newcommand{\plannersalgeps}{\plannersalg_{\epsilon}}
We can now define the central planner's algorithm $\plannersalgeps$.
This algorithm is parametrized by a learning rate $\epsilon < \frac 1{16}$;
in order to achieve the unparametrized version we decrease $\epsilon$ over time.
Let be the sparse matrix with $Z_{\xot\xit}^t = p^t$.
Given $\b{Z^t}_{t < T}$, the following algorithm gives
the central planner's recommendations in round $T$:
\begin{itemize}
\item Set $P^0_{ab} = 1$ for $a, b \in \users$.
\item For $t = 0, 1,\ldots, T-1$:
\begin{itemize}
\item Define $E^t$ to be the sparse matrix with the following non-zero entries:
\begin{align*}
E^t_{(a, 0)(b, 0)} &= (1 - P^t_{ab}) Z^t_{ab} \\
E^t_{(a, 0)(b, 1)} &= - P^t_{ab} Z^t_{ab} \\
E^t_{(a,-1)(b, 0)} &= - P^t_{ab} Z^t_{ab}.
\end{align*}
\item Let $X^{t+1} = \myalgro{E^{\leq t}}$. 
\item Define
\begin{align*}
C^{t+1}_{ab} &= \frac 14 X^{t+1}_{(a,0)(b,1)} + \frac 14 X^{t+1}_{(a,-1)(b,0)} \\
J^{t+1}_{ab} &= \frac 14 X^{t+1}_{(a,0)(b,0)} + \epsilon^{1/2}
\end{align*}
\item Update $P$ by $P^{t+1}_{ab} =  (1 - C^{t+1}_{ab} - J^{t+1}_{ab}) P^{t+1}_{ab} + J^{t+1}_{ab}$.
The construction of $E^t$ ensures that 
$p^t P^{t}_{\xot\xit}$ is a linear function
of $\mprod{X^t}{E^t}$.
\end{itemize}
\item Output $s^T = P^T_{x_0^Tx_1^T}$.
\end{itemize}

\subsection{Analysis}
For convenience, we will write $p^t\of{P}$ for the payoff the planner
would obtain by using the matrix $P$ in round $t$, namely
\[ p^t\of{P} = p^t P_{x_0^t x_1^t}. \]

\newcommand{\update}[2]{U\of{#1,#2}}
We write $\update{P^t}{X}$ for the matrix that would be obtained by updating $P^t$
with the matrix $X^{t+1} = X$,
i.e. defined by the equations
\begin{align*}
C_{ab} &= \frac 14 X_{(a,0)(b,1)}+\frac 14 X_{(a,-1)(b,0)}, \\
J_{ab} &= \frac 14 X_{(a,0)(b,0)} + \epsilon^{1/2}, \\
\update{P^t}{X}_{ab} &= (1 - C_{ab} - J_{ab}) P^t_{ab} + J_{ab}
\end{align*}

\begin{lemma}\label{performance}
For any $X \succeq 0$ with $0 \leq X_{(i, a)(j, b)} \leq 1$
we have
\[ \sum_{t \leq T} p^t\of{P^t} \geq \sum_{t \leq T} p^t\of{\update{P^{t-1}}{X}} -
\O{\frac {\epsilon}{\rho} \sum \abs{p^t}^2 } - 
\O{\frac {\rho N}{\epsilon}},
\]
where $P^t$ are the matrices computed by $\plannersalgeps$.
\end{lemma}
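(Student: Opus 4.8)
The plan is to exploit the fact that the update map $\update{P^{t-1}}{\cdot}$ is affine in its second argument, so that the single-round payoff of the updated matrix is an affine function of the prediction that produced it; the lemma then falls out by reading off the online-learning regret bound from this linearization. First I would establish the key identity, which is exactly what the construction of the loss matrices is engineered to give. Fix a round $t$ and regard the base iterate $P^{t-1}$ as fixed. Expanding $\update{P^{t-1}}{Y}_{\xot\xit} = (1 - C_{\xot\xit} - J_{\xot\xit})P^{t-1}_{\xot\xit} + J_{\xot\xit}$, with $C,J$ the linear functions of $Y$ from the definition of $U$, a direct computation gives
\[ p^t\of{\update{P^{t-1}}{Y}} = p^t P^{t-1}_{\xot\xit} + \tfrac14 \mprod{Y}{\widehat{E}^t} + \epsilon^{1/2} p^t \of{1 - P^{t-1}_{\xot\xit}}, \]
where $\widehat{E}^t$ is the sparse matrix obtained from the $E$-construction evaluated at $P^{t-1}$, i.e. with entries $p^t(1 - P^{t-1}_{\xot\xit})$, $-p^t P^{t-1}_{\xot\xit}$, $-p^t P^{t-1}_{\xot\xit}$ at the pairs $((\xot,0),(\xit,0))$, $((\xot,0),(\xit,1))$, $((\xot,-1),(\xit,0))$. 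The crucial feature is that only the middle term depends on $Y$: the base term $p^t P^{t-1}_{\xot\xit}$ and the drift term $\epsilon^{1/2}p^t(1 - P^{t-1}_{\xot\xit})$ are independent of $Y$.

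Next I would take differences. Since $\plannersalgeps$ sets $P^t = \update{P^{t-1}}{X^t}$, applying the identity once with $Y = X^t$ and once with $Y = X$ and subtracting cancels the $Y$-independent terms, leaving $p^t\of{P^t} - p^t\of{\update{P^{t-1}}{X}} = \tfrac14\mprod{X^t - X}{\widehat{E}^t}$. Summing over $t \leq T$ yields $\sum_t p^t\of{P^t} - \sum_t p^t\of{\update{P^{t-1}}{X}} = \tfrac14\of{\sum_t \mprod{\widehat{E}^t}{X^t} - \sum_t \mprod{\widehat{E}^t}{X}}$, and the right-hand side is exactly $\tfrac14$ times the regret of the learner's predictions $X^t$ against the fixed comparator $X$ on the sequence $\widehat{E}^t$. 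I would then invoke Theorem~\ref{oll} with learning rate $\epsilon/\rho$ to lower bound this by $-\O{\frac{\epsilon}{\rho}\sum_t \abs{\widehat{E}^t}_1^2} - \O{\frac{\rho N}{\epsilon}}$, and finish with the routine estimate $\abs{\widehat{E}^t}_1 \leq (1 + P^{t-1}_{\xot\xit})\abs{p^t} \leq 2\abs{p^t}$, so that $\sum_t \abs{\widehat{E}^t}_1^2 = \O{\sum_t \abs{p^t}^2}$ and the constant $\tfrac14$ is absorbed into the $\O{\cdot}$, giving the stated bound.

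The main obstacle is matching the sequence actually fed to the online learner with the sequence $\widehat{E}^t$ appearing in the linearization. The guarantee of Theorem~\ref{oll} applies to the prediction $X^t = \myalgro{\widehat{E}^{<t}}$ only if the learner receives precisely the matrices $\widehat{E}^t$ built from the pre-update iterate $P^{t-1}$; the care here is to confirm that this is the sequence driving the algorithm (so that $X^t$ is committed strictly before $\widehat{E}^t$ is revealed, as the online setting demands) and that the $Y$-independent base and drift terms cancel exactly, leaving no residual depending on $X$. This is precisely why the comparison in the lemma is against $\update{P^{t-1}}{X}$ rather than $\update{P^t}{X}$: only with the common base point $P^{t-1}$ do the constant terms cancel, reducing the whole statement to a single application of the OLL regret bound. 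Once the alignment is pinned down, the remaining steps are the norm estimate and the substitution of the learning rate.
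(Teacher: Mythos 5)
Your proposal is correct and is essentially the paper's own proof: the same linearization $p^t\of{U\of{P^{t-1},X}} = p^t P^{t-1}_{x_0^t x_1^t} + \frac{1}{4}\langle X, E^t\rangle + \beta^t$ with an $X$-independent remainder $\beta^t$, cancellation against $p^t\of{P^t} = p^t\of{U\of{P^{t-1},X^t}}$, a single invocation of Theorem~\ref{oll} at learning rate $\epsilon/\rho$, and the estimate $\abs{E^t}_1 = \O{\abs{p^t}}$. Your explicit check that the learner is fed the matrices built from the pre-update iterate (so $X^t$ is committed before $E^t$ is revealed) addresses the one point the paper handles sloppily---its algorithm statement writes $E^t$ in terms of $P^t$ while its proof uses $E^t$ built from $P^{t-1}$---and your bound $\abs{E^t}_1 \leq 2\abs{p^t}$ slightly sharpens the paper's $3\abs{p^t}$, neither of which affects the asymptotics.
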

\begin{proof}
We have $p^t\of{P^t} = p^t\of{\update{P^{t-1}}{X^t}}$.
Moreover,
\begin{align*}
p^t\of{\update{P^{t-1}}{X}}
&= p^t \of{ (1 - J_{\xot\xit} - C_{\xot\xit})P^{t-1}_{\xot\xit} + J_{\xot\xit }} \\
&= p^t\of{P^{t-1}_{\xot\xit}} + J_{\xot\xit} p^t \of{1 - P^{t-1}_{\xot\xit}} - C_{\xot\xit}p^t P^{t-1}_{\xot\xit} \\
&= p^t\of{P^{t-1}} + \frac 14 \mprod{X}{E^t} + \beta^t,
\end{align*}
where $J$ and $C$ are defined from $X$ in the same way that $J^t$ and $C^t$ are defined from $X^t$,
and $\beta^t = \epsilon^{1/2} p^t (1 - P^{t-1}_{\xot\xot})$
doesn't depend on $X$.
We constructed the matrix $E^t$ precisely so that it would satisfy this identity.

Because $X^t = \myalgro{E^{<t}}$, by Theorem~\ref{oll} we have
\begin{align*}
\sum_{t \leq T} p^t\of{P^t} 
&= \sum_{t \leq T} \of{p^t\of{P^{t-1}} + \frac 14 \mprod{X^t}{E^t} + \beta^t} \\
&\geq \sum_{t \leq T} \of{p^t\of{P^{t-1}} + \frac 14 \mprod{X}{E^t} + \beta^t} -
\O{\frac {\epsilon}{\rho} \sum \abs{E^t}_1^2 } - 
\O{\frac {\rho N}{\epsilon}} \\
&= \sum_{t \leq T} p^t\of{\update{P^{t-1}}{X}} -
\O{\frac {\epsilon}{\rho} \sum \abs{E^t}_1^2 } - 
\O{\frac {\rho N}{\epsilon}} \\
&= \sum_{t \leq T} p^t\of{\update{P^{t-1}}{X}} -
\O{\frac {\epsilon}{\rho} \sum \abs{p^t}^2 } - 
\O{\frac {\rho N}{\epsilon}},
\end{align*}
where the last equality holds because $\abs{E^t}_1 = 3\abs{p^t}$. \end{proof}

\begin{lemma}\label{slowchange}
For any $a, b$, we have
\[ \sum_{t < T} \abs{P_{ab}^{t+1} - P_{ab}^t} = \O{\frac{\epsilon^{1/2}}{\rho} \sum_{t <T} \abs{p^t}} + \O{1},\]
where $P^t$ are the matrices computed by $\plannersalgeps$.
\end{lemma}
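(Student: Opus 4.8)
The plan is to read the update rule as a contraction of $P^t_{ab}$ toward a moving target and then to bound the total movement $\sum_t \abs{P^{t+1}_{ab} - P^t_{ab}}$ by the total movement of that target, via telescoping. Fix $a$ and $b$ and suppress the subscript $ab$ throughout. Set $r^{t+1} = C^{t+1}_{ab} + J^{t+1}_{ab}$ and $q^{t+1} = J^{t+1}_{ab}/r^{t+1}$, so that the update reads
\[ P^{t+1} = \of{1 - r^{t+1}} P^t + r^{t+1} q^{t+1}, \qquad P^{t+1} - P^t = r^{t+1}\of{q^{t+1} - P^t}. \]
Because $X^{t+1} \succeq 0$ has entries in $[0,1]$, we have $J^{t+1} \in [\epsilon^{1/2}, \tfrac14 + \epsilon^{1/2}]$ and $C^{t+1} \in [0, \tfrac12]$; hence, using $\epsilon < \tfrac1{16}$, the rate satisfies $r^{t+1} \in [\epsilon^{1/2}, 1)$ and the target satisfies $q^{t+1} \in [0,1]$. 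In particular each step genuinely contracts toward $q^{t+1}$, with rate bounded below by $\epsilon^{1/2}$.

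Now introduce $a^t = \abs{P^t - q^{t+1}}$ and $g^t = \abs{q^{t+1} - q^{t+2}}$. From $P^{t+1} - q^{t+1} = \of{1 - r^{t+1}}\of{P^t - q^{t+1}}$ and the triangle inequality we get $a^{t+1} \leq \of{1 - r^{t+1}} a^t + g^t$, i.e.
\[ \abs{P^{t+1} - P^t} = r^{t+1} a^t \leq a^t - a^{t+1} + g^t. \]
Summing over $t < T$ telescopes, and since $a^0 \leq 1$ and $a^t \geq 0$ the endpoint contributions are $\O{1}$:
\[ \sum_{t < T} \abs{P^{t+1}_{ab} - P^t_{ab}} \leq \O{1} + \sum_{t < T} g^t. \]
This telescoping is the crucial move: although a single step can shift $P$ by as much as $r^{t+1} a^t = \O{\epsilon^{1/2}}$, the cumulative shift is governed not by the number of rounds but by how far the target $q^t$ drifts in total.

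It remains to bound $\sum_t g^t$. By the smoothness lemma (Lemma~\ref{smoothness}) applied to $\myalgr$, consecutive SDP solutions satisfy $\abs{X^{t+1} - X^{t+2}}_{\infty} = \O{\tfrac{\epsilon}{\rho}\abs{E^{t+1}}_1} = \O{\tfrac{\epsilon}{\rho}\abs{p^{t+1}}}$, so $\abs{C^{t+2} - C^{t+1}}$ and $\abs{J^{t+2} - J^{t+1}}$ are each $\O{\tfrac{\epsilon}{\rho}\abs{p^{t+1}}}$. The map $\of{C, J} \mapsto J/\of{C+J}$ has both partial derivatives bounded in magnitude by $1/\of{C+J} \leq \epsilon^{-1/2}$ on the region $J \geq \epsilon^{1/2}$, and this region is preserved along the segment joining consecutive parameter values, so
\[ g^t \leq \epsilon^{-1/2}\of{\abs{C^{t+2} - C^{t+1}} + \abs{J^{t+2} - J^{t+1}}} = \O{\tfrac{\epsilon^{1/2}}{\rho}\abs{p^{t+1}}}. \]
Summing gives $\sum_t g^t = \O{\tfrac{\epsilon^{1/2}}{\rho}\sum_t \abs{p^t}}$, which completes the bound. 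The key structural reason this works---and the reason the $+\epsilon^{1/2}$ floor was built into $J^{t+1}_{ab}$---is that this one floor simultaneously guarantees a contraction rate of at least $\epsilon^{1/2}$ (so the telescoping is effective) and keeps the denominator $C + J$ away from $0$ (so the $\epsilon^{-1/2}$ amplification in $g^t$ is controlled). I expect the main obstacle to be verifying that these two effects balance to the claimed $\epsilon^{1/2}$ exponent rather than $\epsilon^{0}$ or $\epsilon$; the smoothness estimate on $X$ and the Lipschitz estimate on $q$ must be combined with no slack for the powers of $\epsilon$ to come out right.
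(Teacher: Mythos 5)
Your proposal is correct and is essentially the paper's own argument: your moving target $q^{t+1}=J^{t+1}_{ab}/\of{C^{t+1}_{ab}+J^{t+1}_{ab}}$ is exactly the paper's steady state $Y^t_{ab}$, your telescoping inequality $r^{t+1}a^t \leq a^t - a^{t+1} + g^t$ is an algebraic rendering of the paper's path-length argument, and your bound $g^t = \O{\epsilon^{1/2}\abs{p^{t+1}}/\rho}$ via Lemma~\ref{smoothness} together with the floor $J \geq \epsilon^{1/2}$ is precisely how the paper bounds $\abs{Y^{t+1}_{ab}-Y^t_{ab}}$ (indeed you justify that Lipschitz step in more detail than the paper does). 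One quibble with your closing commentary: the lower bound $r^{t+1}\geq\epsilon^{1/2}$ plays no role in making the telescoping ``effective''---your inequality is valid for any $r^{t+1}\in[0,1]$---so the $\epsilon^{1/2}$ floor on $J$ matters only for keeping the denominator $C+J$ away from zero in the drift bound (its other uses in the paper are elsewhere, e.g.\ lower-bounding $s^t$ in Theorem~\ref{undoubledplanner}).
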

\begin{proof}
Consider the ``steady state'' value of $P$ given $J$ and $C$, defined as
\[ Y^t_{ab} = \frac {J^t_{ab}}{J^t_{ab} + C^t_{ab}}. \]
This is the value to which $P$ would converge if $J$ and $C$ were left unchanged
for many iterations.
We will show that $Y^t$ does not change very quickly,
and that $P^t$ is ``following'' $Y^t$ and hence does not change very quickly either.

Note that each entry of $J^t$ and $C^t$ is defined to be a constant multiple
of an entry of $\myalgro{E^{<t}}$,
and that the entry-wise $\ell_1$ norm of $E^t$ is $\O{p^t}$.
So by Lemma~\ref{smoothness}, 
we know that $\abs{J^{t+1}_{ab} - J^t_{ab}} \leq \frac{\epsilon}{\rho} \abs{p^t}$,
and similarly $\abs{C^{t+1}_{ab} - C^t_{ab}} \leq \frac{\epsilon}{\rho} \abs{p^t}$.
By definition we have $C^{t}_{ab} \geq 0$ and $J^t_{ab} \geq \epsilon^{1/2}$.
Together these facts imply that
$\abs{Y^{t+1}_{ab} - Y^t_{ab}} = \O{\frac {\epsilon^{1/2}}{\rho} \abs{p^t}}$.

By the definition of $P^{t+1}$, we have
\begin{align*}
P^{t+1}_{ab} &= (1 - J^t_{ab} - C^t_{ab}) P^t_{ab} + J^t_{ab} \\
&= (1 - J^t_{ab} - C^t_{ab}) P^t_{ab} + (J^t_{ab} + C^t_{ab}) Y^t_{ab} \\
\end{align*}
By the definition of $J$ and $C$,
we have $J^t_{ab} + C^t_{ab} \leq \frac 14 + \frac 14 + \frac 14 + \epsilon^{1/2} \leq 1$.
Thus $P^{t+1}$ is a convex combination of $P^t_{ab}$ and $Y^t_{ab}$.
In particular, 
\begin{align*}
\abs{P^{t}_{ab} - Y^t_{ab}} = \abs{P^{t}_{ab} - P^{t+1}_{ab}} + \abs{P^{t+1}_{ab} - Y^t_{ab}}.
\end{align*}

Intuitively this suggests that $\sum{\abs{P^{t+1}_{ab} - P^t_{ab}}} \leq \sum{\abs{Y^{t+1}_{ab} - Y^t_{ab}}}$,
from which the desired result would follow.

To see this formally, consider the total length of the path
\[ P^0_{ab}, P^1_{ab}, \ldots, P^t_{ab}, Y^t_{ab}, \]
i.e. the sum of the absolute values of the differences between consecutive points.
If we change this path by adding a new point:
\[ P^0_{ab}, P^1_{ab}, \ldots, P^t_{ab}, P^{t+1}_{ab}, Y^t_{ab}\]
we don't change the total length, because $P^{t+1}$ is in between $P^t$ and $Y^{t}$.
If we then change this path by changing the last point:
\[ P^0_{ab}, P^1_{ab}, \ldots, P^t_{ab}, P^{t+1}_{ab}, Y^{t+1}_{ab} \]
we increase the total length by at most $\abs{Y^{t+1}_{ab} - Y^t_{ab}}$.

By induction, the total length of the path is thus $\sum{\abs{Y^{t+1}_{ab} - Y^t_{ab}}} + \O{1}$,
and this total length upper bounds $\sum{\abs{P^{t+1}_{ab} - P^t_{ab}}}$.
\end{proof}

The next step is to prove that \emph{if} the honest players are not collectively
performing well, \emph{then} there is some recommendation the central planner could have made
which would generate a substantial improvement.

The reason for the $\epsilon^{1/2}$ loss in this theorem is our restriction that $J^t \geq \epsilon^{1/2}$,
which lower bounds the central planner's ability to stop honest and dishonest users from interacting
and was needed to prevent rapid changes in $P^t$.

\begin{lemma}\label{improvementconversion}
For any set $H$ there is a matrix $X^H \succeq 0$,
with $0 \leq X^H_{(i, a)(j, b)} \leq 1$, such that
\[ \OPTHi - \PtHi =
\O{ \sum_{t \leq T} \of{p^t\of{\update{P^{t}}{X^H}} - p^t\of{P^t} + \epsilon^{1/2} \abs{p^t}} },\]
where $P^t$ are the matrices produced by $\plannersalgeps$.
\end{lemma}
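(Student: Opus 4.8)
The plan is to exhibit $X^H$ explicitly as the rank-one matrix associated with a deterministic labeling of the users, matching the choice sketched in step (3) of the outline, and then to verify the relation by a round-by-round computation. For the payoff notion $p_0^{\leq T}\of{H}$ (player $0$) I would take $\ell(x) = 0$ for $x \in H$ and $\ell(x) = 1$ for $x \notin H$, and set $X^H = v v^{\top}$ where $v_{(a,i)} = 1$ if $\ell(a) = i$ and $0$ otherwise. This $X^H$ is positive semidefinite with entries in $\b{0,1} \subseteq [0,1]$, as required. For $p_1^{\leq T}\of{H}$ (player $1$) I would instead set $\ell(x) = -1$ off $H$; the analysis is completely symmetric, since the $(a,0)(b,1)$ and $(a,-1)(b,0)$ entries enter the definition of $C_{ab}$ in mirror-image roles, so I will describe only the case $i = 0$.

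Next I would record the effect of the update on a single entry. Writing $C_{ab}, J_{ab}$ for the quantities defined from $X^H$, one has $\update{P^t}{X^H}_{ab} - P^t_{ab} = J_{ab}\of{1 - P^t_{ab}} - C_{ab} P^t_{ab}$. Substituting $X^H$ gives three cases: if $a, b \in H$ then $J_{ab} = \tfrac14 + \epsilon^{1/2}$, $C_{ab} = 0$, so the change is $\of{\tfrac14 + \epsilon^{1/2}}\of{1 - P^t_{ab}}$; if $a \in H,\, b \notin H$ then $J_{ab} = \epsilon^{1/2}$, $C_{ab} = \tfrac14$, so the change is $\epsilon^{1/2} - \of{\tfrac14 + \epsilon^{1/2}} P^t_{ab}$; and if $a \notin H$ then $J_{ab} = \epsilon^{1/2}$, $C_{ab} = 0$, so the change is $\epsilon^{1/2}\of{1 - P^t_{ab}}$. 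Partitioning the rounds into $A = \b{\xot, \xit \in H}$, $B = \b{\xot \in H,\, \xit \notin H}$, and $C = \b{\xot \notin H}$, and using the identity $\OPTHi - p_0^{\leq T}\of{H} = \sum_{t \in A}\of{1 - P^t_{\xot\xit}}p^t - \sum_{t \in B} P^t_{\xot\xit} p^t$, I would sum the per-round improvements $p^t\of{\update{P^t}{X^H}} - p^t\of{P^t}$ over all three classes.

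The key computation is that these two quantities coincide up to a small floor term: the $A$-contribution together with the $P^t$-part of the $B$-contribution combine into $\of{\tfrac14 + \epsilon^{1/2}}\of{\OPTHi - p_0^{\leq T}\of{H}}$, while the remaining pieces are $\epsilon^{1/2}\of{\sum_{t \in B} p^t + \sum_{t \in C}\of{1 - P^t_{\xot\xit}}p^t}$, whose absolute value is at most $\epsilon^{1/2}\sum_{t \leq T}\abs{p^t}$ because $B$ and $C$ are disjoint and $0 \le P^t_{\xot\xit} \le 1$. Rearranging and dividing by $\tfrac14 + \epsilon^{1/2} \ge \tfrac14$ (recall $\epsilon < \tfrac1{16}$) gives $\OPTHi - p_0^{\leq T}\of{H} \le 4\sum_t\of{p^t\of{\update{P^t}{X^H}} - p^t\of{P^t}} + \O{\epsilon^{1/2}\sum_t \abs{p^t}}$, which is exactly the claimed bound for $i = 0$, and the $i = 1$ case follows by symmetry.

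The arithmetic is routine; the only genuine idea, and the step I would watch most carefully, is the choice of the three-valued labeling. The point is that the labels $1$ and $-1$ are precisely what allow a single comparison-class matrix to simultaneously (i) drive $P_{ab}$ toward $1$ on honest--honest pairs and (ii) drive $P_{ab}$ toward $0$ on exactly the mixed pairs relevant to the targeted payoff notion $\PtHi$, while leaving the other mixed pairs untouched (for $i=0$, the cut entry $(a,-1)(b,0)$ never fires because dishonest users carry label $1$, not $-1$). I would also double-check that the forced floor $J_{ab} \ge \epsilon^{1/2}$ --- the same floor that Lemma~\ref{slowchange} needed to keep $P^t$ slowly varying --- contributes only the benign $\epsilon^{1/2}\abs{p^t}$ terms that the statement already budgets for, and does not corrupt the leading $\of{\tfrac14 + \epsilon^{1/2}}\of{\OPTHi - \PtHi}$ term.
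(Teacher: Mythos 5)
Your proposal is correct and matches the paper's proof essentially step for step: the same rank-one labeling matrix $X^H = v v^{\top}$ (honest users labeled $0$, dishonest users labeled $1$ for $i = 0$, with $-1$ in place of $1$ for $i = 1$ by the transposition symmetry of the algorithm), the same three-case computation of $J_{ab}$ and $C_{ab}$, and the same rearrangement producing the factor $4$ and the $\epsilon^{1/2}\abs{p^t}$ slack. If anything you are slightly more careful than the paper, which asserts that rounds with $\xot \notin H$ leave the update unchanged, silently absorbing the $\epsilon^{1/2}$ floor on $J_{ab}$ that you bound explicitly by $\epsilon^{1/2}\abs{p^t}$.
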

\begin{proof}
Observe that $\plannersalgeps$ is perfectly symmetrical under an exchange of $\xot$ and $\xit$:
the only difference is that the matrices $P^t$ are transposed,
and the indices of $X_{(a, i)(b, j)}$ are modified by exchanging $+1$ and $-1$.
So it suffices to prove the theorem for $i = 0$.

Define the vector $v^H$ via $v_{(a, 0)} = 1$ for $a \in H$,
$v_{(a, 1)} = 1$ for $a \not\in H$,
and $v_{(a, i)} = 0$ otherwise.
Define the matrix $X^H$ as $\of{v^H} \of{v^H}^T$.
This matrix is clearly PSD, and has entries in $[0, 1]$.

Suppose $a \not \in H$. Then $X^H_{(a, i)(b, j)} = 0$,
and so $\update{P^t}{X^H}_{ab} = P^t_{ab}$.
Alternatively, suppose $a \in H$.
Then $X_{(a, -1)(b, 0)} = 0$, and exactly one of
$X_{(a, 0)(b, 0)}$ and $X_{(a, 0)(b, 1)}$ is $1$,
according to whether $b \in H$.

So for $a, b \in H$, we have
\[ \update{P^t}{X^H}_{ab} = \frac 34 P^t_{ab} + \frac 14 + \epsilon^{1/2} (1 - P^t_{ab}). \]
For $a\in H, b \not \in H$, we have
\[ \update{P^t}{X^H}_{ab} = \frac 34 P^t_{ab} + \epsilon^{1/2} (1 - P^t_{ab}). \]

We compute:
\begin{align*}
\sum_{t \leq T : \xot \in H} \of{\update{P^t}{X^H}} &\geq
\frac 34 \sum_{t \leq T: \xot \in H} p^t\of{P^t} + \frac 14 \sum_{t \leq T: \xot, \xit \in H}p^t - \epsilon^{1/2} \sum{\abs{p^t}} \\
&= \PtHi - \frac 14 \PtHi + \frac 14 \OPTHi - \epsilon^{1/2} \sum{\abs{p^t}}
\end{align*}
Rearranging, and using the fact that $\update{P^t}{X^H}_{ab} = P^t_{ab}$ for $a \not\in H$:
\begin{align*}
\OPTHi - \PtHi 
&= 4 \sum_{t \leq T : \xot \in H} \of{ p^t\of{\update{P^t}{X^H}} - p^t\of{P^t} + \epsilon^{1/2} \abs{p^t}} \\
&= 4 \sum_{t \leq T} \of{ p^t\of{\update{P^t}{X^H}} - p^t\of{P^t} + \epsilon^{1/2} \abs{p^t}},
\end{align*}
as desired.
\end{proof}

We are now ready to prove the main result of this section:
\begin{theorem}\label{undoubledplanner}
Suppose that $p^t s^t \in [-\rho, \rho]$ and $\E{p^t} \in [-\rho, \rho]$ conditioned
on the preceding rounds.
Then
\[ \PtHi \geq \OPTHi - \O{\rho \epsilon^{1/2} T}  - \O{\frac {\rho N}{\epsilon}} \]
in expectation, where $\PtHi$ is the payoff achieved by $\plannersalgeps$.
\end{theorem}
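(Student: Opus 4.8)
The plan is to chain together the three lemmas already proved in this section, with the choice of comparison matrix $X = X^H$ supplied by Lemma~\ref{improvementconversion}. The skeleton is: Lemma~\ref{performance} says the planner's realized payoff $\sum_t p^t\of{P^t}$ competes with $\sum_t p^t\of{\update{P^{t-1}}{X}}$ for \emph{any} fixed $X$ in the comparison class, up to the regret terms $\O{\frac{\epsilon}{\rho}\sum\abs{p^t}^2}$ and $\O{\frac{\rho N}{\epsilon}}$. Meanwhile Lemma~\ref{improvementconversion} expresses the quantity we actually care about, $\OPTHi - \PtHi$, in terms of $\sum_t\of{p^t\of{\update{P^{t}}{X^H}} - p^t\of{P^t}}$ plus an $\epsilon^{1/2}\sum\abs{p^t}$ slack. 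The crucial mismatch is the superscript: Lemma~\ref{performance} controls the update applied to $P^{t-1}$, whereas Lemma~\ref{improvementconversion} uses the update applied to $P^{t}$. Bridging this gap is exactly the role of Lemma~\ref{slowchange}.

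Concretely, I would first fix $X = X^H$ and rewrite $p^t\of{\update{P^t}{X^H}} = p^t\of{\update{P^{t-1}}{X^H}} + \Delta^t$, where $\Delta^t$ collects the error from replacing $P^{t-1}$ by $P^t$ inside the update. Since $\update{P}{X}_{ab}$ is affine in the single entry $P_{\xot\xit}$ with slope $1 - J_{ab} - C_{ab} \in [0,1]$, the per-round error satisfies $\abs{\Delta^t} \leq \abs{p^t}\,\abs{P^t_{\xot\xit} - P^{t-1}_{\xot\xit}}$. Summing and applying Lemma~\ref{slowchange} to the single coordinate $(\xot,\xit)$ relevant in each round bounds $\sum_t\abs{\Delta^t}$ by $\O{\frac{\epsilon^{1/2}}{\rho}\sum\abs{p^t}^2} + \O{\rho\cdot\text{(number of distinct pairs)}}$; here I would use $\abs{p^t}\leq\rho$ to convert the extra $\O{1}$ terms into an $\O{\rho N^2}$ or, more carefully, an additive term that the final regret already dominates. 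This step lets me replace $P^t$ by $P^{t-1}$ in Lemma~\ref{improvementconversion} at the cost of another term of the same order as the existing regret.

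With the superscripts aligned, I would substitute the bound from Lemma~\ref{performance} (with $X=X^H$) directly, yielding
\begin{align*}
\OPTHi - \PtHi &= \O{\sum_{t\leq T}\of{p^t\of{\update{P^{t-1}}{X^H}} - p^t\of{P^t}} + \epsilon^{1/2}\sum\abs{p^t} + \sum\abs{\Delta^t}} \\
&= \O{\frac{\epsilon}{\rho}\sum\abs{p^t}^2 + \frac{\rho N}{\epsilon} + \epsilon^{1/2}\sum\abs{p^t}}.
\end{align*}
Now I would plug in the crude bounds $\abs{p^t}\leq\rho$ and $\abs{p^t}^2\leq\rho\abs{p^t}\leq\rho^2$, so that $\sum\abs{p^t}^2 = \O{\rho^2 T}$ and $\sum\abs{p^t} = \O{\rho T}$. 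This collapses the first term to $\O{\epsilon\rho T}$ and the third to $\O{\epsilon^{1/2}\rho T}$; since $\epsilon < \tfrac{1}{16}$ the $\epsilon^{1/2}$ term dominates $\epsilon$, giving the claimed $\O{\rho\epsilon^{1/2}T} + \O{\frac{\rho N}{\epsilon}}$.

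The main obstacle is the second paragraph: correctly accounting for the $P^{t-1}\to P^t$ shift. The $\O{1}$ additive terms in Lemma~\ref{slowchange} are stated per coordinate $(a,b)$, and naively summing over all coordinates touched would be wasteful; I would need to argue that only the coordinates actually appearing as $(\xot,\xit)$ contribute, and verify that the resulting additive overhead is swallowed by $\O{\frac{\rho N}{\epsilon}}$. A secondary subtlety is that everything should be read in expectation — the hypotheses bound $\E{p^t}$ conditioned on the past, so where the analysis uses $p^t$ against quantities measurable before the payoff is revealed (such as $P^{t-1}$ or the comparison update evaluated at $P^{t-1}$ or $P^t$ which depends only on $E^{<t}$ and hence on past payoffs), I would take conditional expectations round by round to replace $p^t$ by $\E{p^t}$ legitimately, taking care that $P^t$ is determined by payoffs through round $t-1$ and is therefore independent of the fresh noise in $p^t$.
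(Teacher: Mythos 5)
Your proof follows the paper's own route essentially step for step: invoke Lemma~\ref{improvementconversion} to get the comparator $X^H$, split $p^t\of{U\of{P^t, X^H}} - p^t\of{P^t}$ into the superscript-shift term plus the term handled by Lemma~\ref{performance}, control the shift via Lemma~\ref{slowchange}, and decouple $p^t$ from past-measurable quantities by conditioning. Your observation that only the active entry matters, with $\abs{\Delta^t} \leq \abs{p^t}\,\abs{P^t_{\xot\xit} - P^{t-1}_{\xot\xit}}$ because the slope $1 - J_{ab} - C_{ab}$ lies in $[0,1]$, is in fact a slightly sharper form of the paper's $\ell_\infty$ step. However, the final quantitative accounting has a genuine error.

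The hypothesis is $p^t s^t \in [-\rho, \rho]$, \emph{not} $\abs{p^t} \leq \rho$. The only pointwise guarantee available is $s^t = P^t_{\xot\xit} \geq J^t_{\xot\xit} \geq \epsilon^{1/2}$, hence $\abs{p^t} \leq \rho\, \epsilon^{-1/2}$; indeed in the intended application the reported payoff is $p^t = p_0^t / s^t$, which routinely exceeds $\rho$. So your crude bounds $\abs{p^t}^2 \leq \rho^2$ and $\sum \abs{p^t}^2 = \O{\rho^2 T}$ are unjustified. The correct accounting, which is the paper's, combines the pointwise bound with the conditional one: $\Ec{\abs{p^t}^2}{\text{past}} \leq \rho\,\epsilon^{-1/2}\, \Ec{\abs{p^t}}{\text{past}} \leq \rho^2 \epsilon^{-1/2}$, so the Lemma~\ref{performance} term is $\frac{\epsilon}{\rho}\sum \E{\abs{p^t}^2} = \O{\rho\, \epsilon^{1/2}\, T}$. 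That is, this term is not dominated as you claim (your $\O{\epsilon \rho T}$ is wrong); it is precisely one of the two sources of the theorem's $\O{\rho\, \epsilon^{1/2}\, T}$, so getting it right is load-bearing.

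A second, related problem is your treatment of $\sum_t \abs{\Delta^t}$. The intermediate bound $\O{\frac{\epsilon^{1/2}}{\rho} \sum \abs{p^t}^2}$ does not follow from applying Lemma~\ref{slowchange} ``to the single coordinate relevant in each round'': that lemma bounds the total path length of a fixed coordinate against $\sum_t \abs{p^t}$ over \emph{all} rounds and gives no per-round product bound, so summing its conclusion over the distinct active pairs, as your fallback proposes, costs up to $\O{\rho N^2}$ extra, which is \emph{not} swallowed by $\O{\frac{\rho N}{\epsilon}}$ unless $\epsilon \leq 1/N$. The resolution, which you correctly gesture at in your last paragraph but do not execute, is the paper's: $\abs{P^t_{\xot\xit} - P^{t-1}_{\xot\xit}}$ is measurable before $p^t$ is drawn (since $P^t$ depends only on rounds $< t$ and $\xot, \xit$ are revealed first), so condition round by round to replace $\abs{p^t}$ in the product by its conditional mean $\rho$, and only then invoke Lemma~\ref{slowchange}, whose bound is uniform over coordinates, incurring a single additive $\O{\rho}$ rather than one per pair; this gives $\rho \sum_t \abs{P^t - P^{t-1}}_\infty = \O{\epsilon^{1/2} \sum_t \abs{p^t}} + \O{\rho} = \O{\rho\, \epsilon^{1/2}\, T}$ in expectation. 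With these two repairs your argument coincides with the paper's proof and the theorem follows.
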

\begin{proof}
By Lemma~\ref{improvementconversion}, there exists an $X^H \succeq 0$ with $0 \leq X^H_{(a, i)(b, j)} \leq 1$
such that
\begin{align*}
\OPTHi - \PtHi 
&= \O{ \sum_{t \leq T} \of{p^t\of{\update{P^{t}}{X^H}} - p^t\of{P^t} + \epsilon^{1/2} \abs{p^t}} } \\
&\leq
\O{ \sum_{t \leq T} \of{p^t\of{\update{P^{t}}{X^H}} - p^t\of{P^t}} + \rho \epsilon^{1/2} T }
\end{align*}
in expectation, where the second equality holds because $\E{\abs{p^t}} \leq \rho$.
So it suffices to bound $\sum_{t \leq T} \of{p^t\of{\update{P^{t}}{X^H}} - p^t\of{P^t}}$.

Note that $\plannersalgeps$ always satisfies $s^t = P^t_{\xot\xit} \geq J^t_{\xot\xit} \geq \epsilon^{1/2}$,
and consequently $\abs{p^t}\leq \frac {s^t}{\rho} = \frac {\epsilon^{1/2}}{\rho}$.

We have
\begin{align*}
&\sum_{t \leq T} \of{p^t\of{\update{P^{t}}{X^H}} - p^t\of{P^t}} \\
&= \sum_{t \leq T} \of{p^t\of{\update{P^{t}}{X^H}} - p^t\of{\update{P^{t-1}}{X^H}}}
+ \sum_{t \leq T} \of{p^t\of{\update{P^{t-1}}{X^H}} - p^t\of{P^t}} \\
&\leq \sum_{t \leq T} p^t\abs{\update{P^{t}}{X^H} - \update{P^{t-1}}{X^H}}_{\infty}
+ \O{\frac {\epsilon}{\rho} \sum \abs{p^t}^2 } 
+ \O{\frac {\rho N}{\epsilon}} \\
&\leq \sum_{t \leq T} p^t\abs{P^t - P^{t-1}}_{\infty}
+ \O{\epsilon^{1/2}\sum \abs{p^t} }
+ \O{\frac {\rho N}{\epsilon}} \\
&\leq \rho \sum_{t \leq T} \abs{P^t - P^{t-1}}_{\infty}
+ \O{\epsilon^{1/2}\sum \abs{p^t} } 
+ \O{\frac {\rho N}{\epsilon}} \\
&\leq \epsilon^{1/2} \sum_{t \leq T} \abs{p^t}
+ \O{\epsilon^{1/2}\sum \abs{p^t} }
+ \O{\frac {\rho N}{\epsilon}} \\
&= \O{\rho \epsilon^{1/2} T} + \O{\frac {\rho N}{\epsilon}} 
\end{align*}
where the third line uses 
the definition of $p^t$
and
Theorem~\ref{performance},
the fourth line uses the definition of $\update{\cdot}{\cdot}$
and the upper bound $\abs{p^t} \leq \rho \epsilon^{-1/2}$,
the fifth line uses the bound $\E{p^t} \leq \rho$ conditioned on previous rounds
(and in particular conditioned on $\abs{P^t - P^{t-1}}_{\infty}$,
which can be computed before round $t$)
the sixth line uses Theorem~\ref{slowchange},
and the seventh line uses the bound $\E{p^t} \leq \rho$.

\end{proof}

By setting $\epsilon = \frac {N^{2/3}}{T^{2/3}}$
and using a standard doubling trick, we obtain a more convenient form:
\begin{theorem}\label{thm:centralplanner}
Under the same conditions as Theorem~\ref{undoubledplanner},
there is an algorithm $\plannersalg$ for the central planner
which achieves an expected payoff of
\[ \PtHi \geq \OPTH - \REGRET.\]
\end{theorem}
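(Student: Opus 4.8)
The plan is to reduce Theorem~\ref{thm:centralplanner} to the already-proven Theorem~\ref{undoubledplanner} by two essentially routine manipulations: first optimizing the free parameter $\epsilon$ that appears in the two error terms of Theorem~\ref{undoubledplanner}, and second removing the requirement that $T$ be known in advance via a doubling argument. Throughout I would treat Theorem~\ref{undoubledplanner} as a black box, so the whole task is bookkeeping about how the two error terms $\O{\rho \epsilon^{1/2} T}$ and $\O{\rho N / \epsilon}$ trade off against one another.

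First I would balance the two error terms. Theorem~\ref{undoubledplanner} gives the bound $\PtHi \geq \OPTHi - \O{\rho \epsilon^{1/2} T} - \O{\rho N / \epsilon}$ for any fixed learning rate $\epsilon < \frac1{16}$. Treating the right-hand regret as a function of $\epsilon$, the first term increases in $\epsilon$ and the second decreases, so the sum is minimized (up to constants) when they are equal: $\epsilon^{1/2} T = N/\epsilon$, i.e. $\epsilon^{3/2} = N/T$, giving $\epsilon = \of{N/T}^{2/3} = N^{2/3} T^{-2/3}$, exactly the value stated before the theorem. Substituting back, each of the two terms becomes $\O{\rho T^{2/3} N^{1/3}} = \O{\rho \sqrt[3]{N T^2}}$, so the total regret is $\REGRET$ as claimed. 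I would also note the side condition $\epsilon < \frac1{16}$ is satisfied once $T$ is a sufficiently large constant multiple of $N$, which is the only regime where the bound is nontrivial anyway (for small $T$ the benchmark $\OPTHi$ is itself $\O{\rho T}$ and the statement holds vacuously).

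Second I would handle the fact that this optimal $\epsilon$ depends on $T$, which is not known to the planner in advance, using the standard doubling trick alluded to in the text. The idea is to partition the rounds into consecutive epochs of geometrically growing length---epoch $j$ consisting of rounds in $[2^j, 2^{j+1})$, say---and within epoch $j$ run a fresh instance of $\plannersalgeps$ with $\epsilon_j = N^{2/3} T_j^{-2/3}$ tuned to that epoch's length $T_j = \Theta(2^j)$, resetting $P^t$ to the all-ones matrix at the start of each epoch. Since Theorem~\ref{undoubledplanner} applies to each epoch separately (its hypotheses are conditional on preceding rounds and so are inherited by any contiguous block), the regret incurred in epoch $j$ is $\O{\rho \sqrt[3]{N T_j^2}}$. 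Summing over the $\O{\log T}$ epochs up to time $T$, the geometric growth of $T_j$ makes the final epoch dominate: $\sum_j \rho \sqrt[3]{N T_j^2}$ is a geometric series in $T_j^{2/3}$ with ratio $2^{2/3} > 1$, so it sums to $\O{\rho \sqrt[3]{N T^2}}$, absorbing the logarithmic factor into the constant.

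The main obstacle, and the only step requiring genuine care rather than algebra, is verifying that restarting the algorithm at each epoch boundary does not break the guarantee---specifically that the benchmark $\OPTHi$ decomposes additively across epochs while the per-epoch regret bounds remain valid after resetting $P^t = 1$. The benchmark decomposes trivially since it is a sum over rounds, and the reset is harmless because Theorem~\ref{undoubledplanner} was proven starting from the all-ones initialization with no dependence on the history preceding the first round; the conditional-expectation hypotheses $\E{p^t} \in [-\rho, \rho]$ continue to hold within each epoch. The one point to watch is that the additive $\O{\rho N/\epsilon_j}$ startup cost is paid once per epoch, contributing $\sum_j \O{\rho N / \epsilon_j} = \sum_j \O{\rho \sqrt[3]{N T_j^2}}$, which is already subsumed in the geometric sum above, so no new terms appear. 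Once these checks are in place the bound $\PtHi \geq \OPTH - \REGRET$ follows immediately.
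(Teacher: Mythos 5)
Your proposal is correct and follows essentially the same route as the paper: set $\epsilon = N^{2/3}T^{-2/3}$ to balance the two error terms of Theorem~\ref{undoubledplanner}, then remove the dependence on knowing $T$ by restarting $\plannersalgeps$ on epochs $[2^k, 2^{k+1})$ with $\epsilon_k$ tuned to each epoch's length, and sum the resulting geometric series of per-epoch regrets $\O{\rho\, 2^{2k/3} N^{1/3}}$. Your additional checks (additivity of the benchmark across epochs, validity of the reset, and the side condition $\epsilon < \tfrac{1}{16}$ for short epochs, handled by the trivial $\O{\rho T_j}$ bound) are finer points than the paper's own two-line proof bothers to spell out, but they are consistent with it.
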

\begin{proof}
Every time $T = 2^k$, we start a new copy of $\plannersalgeps$
with $\epsilon = \epsilon_k = N^{2/3} T^{-2/3}$.

By Theorem~\ref{undoubledplanner},
we have
\begin{align*}
\sumsub{2^k \leq t < 2^{k+1}}{x_it \in H} p^t s^t 
&\geq \sumsub{2^k \leq t < 2^{k+1}}{\xot, \xit \in H} p^t 
- \O{\rho \epsilon_k^{1/2} 2^k }
- \O{\frac {\rho N}{\epsilon_k}} \\
&= \sumsub{2^k \leq t < 2^{k+1}}{\xot, \xit \in H} p^t 
- \O{\rho 2^{2/3 k} N^{1/3}}
\end{align*}

Summing from $k = 0$ to $k = \left\lceil \logo{T} \right\rceil$,
we obtain the desired bound.
\end{proof}

\section{Appendix: Algorithms for symmetric payoffs, collaborative filtering, and transfers}\label{simplereductions}


\subsection{Symmetric payoffs}

We say that a game has symmetric payoffs if, in each interaction
between honest users,
$p_0^t = p_1^t$.

\newcommand{\symmetricalg}{\text{SYM}}
The algorithm $\symmetricalg$ followed by an honest user is:
\begin{itemize}
\item Initialize a copy of $\plannersalg$.
\item For $t = 0, 1, \ldots$:
\begin{itemize}
\item If you are interacting:
\begin{itemize}
\item Consult $\plannersalg\of{\xot, \xit}$
to find the probability $s^t$ with which you should interact.
\item Using the random beacon
decide whether you should interact,
and broadcast the result.
If you and your partner disagree about whether you should interact,
don't interact.
\item If you decide to interact, observe your payoff and broadcast it.
Otherwise broadcast a payoff of $0$.
\end{itemize}
\item Report a payoff of $-\rho$
to your instance of $\plannersalg$
if anything goes wrong:\footnote{In fact outputting a payoff of $-\infty$
would be most natural, since this should never happen in an interaction
between honest users, but this would require modifying our analysis.}
if the interacting parties broadcast different payoffs,
if either party makes multiple
broadcasts, if either party broadcasts
a payoff outside of $[-\rho, \rho]$, 
or if you disagree about whether an interaction should have occurred
(based on your independent use of the random beacon).
\item Otherwise, report a payoff of $p^t = p_0^t / s^t = p_1^t / s^t$
to your instance of $\plannersalg$.
\end{itemize}
\end{itemize}

We can now apply Theorem~\ref{thm:centralplanner}
\begin{theorem}
If $p_0^t = p_1^t$ in any interaction between honest users,
then the algorithm $\symmetricalg$ achieves a payoff
\[ \PtH \geq \OPTH - \REGRET \]
\end{theorem}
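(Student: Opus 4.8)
The plan is to reduce the claim directly to the central planner guarantee of Theorem~\ref{thm:centralplanner} by checking that the algorithm $\symmetricalg$, when restricted to interactions between honest users, feeds $\plannersalg$ a valid payoff sequence to which that theorem applies. The key observation is that $\symmetricalg$ is a faithful distributed simulation of $\plannersalg$: every honest user runs an identical copy, they use the shared random beacon to agree on whether to interact (so their copies stay synchronized), and the broadcast of payoffs means that whenever both parties are honest they report the same value $p^t = p_0^t/s^t = p_1^t/s^t$ to $\plannersalg$. I would first argue this synchronization carefully, so that the $P^t$ matrices computed by every honest user coincide and the analysis of the central planner carries over verbatim.

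\textbf{Key steps.} First I would verify the hypotheses of Theorem~\ref{thm:centralplanner} for the reported payoff $p^t = p_0^t/s^t$. Since the true payoffs lie in $[-\rho,\rho]$ and $\symmetricalg$ interacts with probability $s^t$, the product $p^t s^t = p_0^t \in [-\rho,\rho]$ as required; and $\E{p^t} = \E{p_0^t/s^t}$, where the expectation is over the random beacon's decision to interact, recovers $\E{p_0^t} \in [-\rho,\rho]$ conditioned on prior rounds, so the conditional mean bound holds. Second, I would observe that $\OPTHi$ computed inside $\plannersalg$ from these reported payoffs equals the true benchmark $\OPTtH$ up to the symmetry $p_0^t = p_1^t$, which lets us identify $\sum_{\xot,\xit\in H} p^t$ with $\OPTtH$. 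Third, I would relate the planner's achieved payoff $\PtHi = \sum_{x_i^t \in H} s^t p^t$ back to the users' actual realized payoffs: because users interact exactly when $\plannersalg$ says (via the beacon), the expected realized payoff of the honest users equals $\sum_{\xot,\xit\in H} s^t (p_0^t + p_1^t)/2$, which by symmetry matches the symmetric payoff notion obtained by averaging $p_0^{\leq T}$ and $p_1^{\leq T}$. Applying the lower bound of Theorem~\ref{thm:centralplanner} to this symmetric payoff then yields $\PtH \geq \OPTtH - \REGRET$.

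\textbf{Main obstacle.} The technically delicate point is the treatment of the ``anything goes wrong'' branch, where a user reports $-\rho$ to $\plannersalg$. Between two honest users nothing ever goes wrong, so they always report the genuine $p_0^t/s^t$; but the theorem must hold for the honest set $H$ regardless of how dishonest partners behave, and a dishonest partner can trigger the fallback. I expect the main work is to confirm that reporting $-\rho$ in these cases is consistent with the hypotheses of Theorem~\ref{thm:centralplanner}---namely that $-\rho \in [-\rho,\rho]$ and that this only ever suppresses interactions initiated by manipulators, so it can never hurt the honest benchmark $\OPTtH$, which counts only rounds with $\xot,\xit\in H$. Verifying that the synchronization via the random beacon keeps all honest copies of $\plannersalg$ identical, even in the presence of adversarial broadcasts, is the step most likely to require care; once that is established the regret bound transfers immediately from Theorem~\ref{thm:centralplanner}.
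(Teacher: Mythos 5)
Your overall route is the same as the paper's: argue by induction that the honest users' copies of the planner stay synchronized (so nothing goes wrong between honest users), verify the hypotheses of Theorem~\ref{thm:centralplanner} for the reported payoffs $p^t$, and transfer the regret bound. Your single application of the planner's guarantee to the symmetric (averaged) payoff is equivalent to what the paper does, which is to apply Theorem~\ref{thm:centralplanner} twice, once attributing the payoff to $\xot$ and once to $\xit$, and sum the two bounds; nothing of substance hinges on this difference.

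There is, however, one step that is wrong as stated and whose correct version is the crux of the argument. You assert that the expected realized payoff of the honest users \emph{equals} $\sum_{t:\,\xot,\xit\in H} s^t\of{p_0^t+p_1^t}/2$. It does not: $\PtH$ also includes the honest users' payoffs from interactions with \emph{dishonest} partners, each of which can be as low as $-\rho$, and equating $\PtH$ with the honest--honest sum silently discards these potential losses. The paper's fix is an inequality on the achieved-payoff side: in every interaction involving an honest user, that user's expected realized payoff is at least $s^t\, \EE\left[p^t\right]$ for the \emph{reported} $p^t$ --- with equality when nothing goes wrong, and with $p^t=-\rho$ serving as a lower bound on the actual payoff when the fallback fires. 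This gives $\PtH \geq \sum_{t:\,\xot\in H} s^t p^t + \sum_{t:\,\xit\in H} s^t p^t$, and only then does the planner's theorem yield $2\sum_{t:\,\xot,\xit\in H}p^t - \REGRET = \OPTH - \REGRET$. Your obstacle paragraph aims the $-\rho$ discussion at the benchmark $\OPTtH$, which is unaffected by definition (it counts only honest--honest rounds); the work is entirely on the achieved side. Relatedly, your claim that $\EE\left[p^t\right]=\EE\left[p_0^t\right]\in[-\rho,\rho]$ holds only in rounds where nothing goes wrong; in general the paper only gets $\EE\left[p^t\right]\in[-2\rho,2\rho]$ (reporting $p_0^t/s^t$ with probability at most $s^t$, plus a possible $-\rho$ otherwise), which is harmless for the asymptotics but should be stated. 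These are local repairs rather than a different approach, but as written the accounting for mixed honest--dishonest interactions is a genuine gap.
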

\begin{proof}
It is easy to verify by induction that all of the honest users' instances
of $\plannersalg$ will be perfectly synchronized,
and consequently that nothing will ever go wrong in an interaction
between honest users.
This also allows us to talk unambiguously about ``the''
instance of $\plannersalg$.

Let $p^t$ be the payoff reported to the central planner
in round $t$.
Either $p^t = 0$, $p^t = -\rho$, or $p^t = p_0^t / s^t$ and $p_0^t \in [-\rho, \rho]$.
Thus $p^t s^t \in [-\rho, \rho]$; because $p^t = p_0^t / s^t$ with probability at most $s^t$,
and otherwise $p^t \in [-\rho, \rho]$, we have
$\E{p^t}  \in [-2\rho, 2\rho]$.
Thus we can apply Theorem~\ref{thm:centralplanner} to the payoffs $p^t$
(the difference between $\rho$ and $2\rho$ makes no difference in the asymptotics).

In any interaction between honest users, we have $\E{p^t} = p_0^t = p_1^t$.
Moreover, in any interaction involving an honest user,
either their expected payoff is equal to $s^t \E{p^t}$,
or else they agreed to interact and $p^t = -\rho$.
In either case, their expected payoff is at least $s^t \E{p^t}$.

Thus we can apply Theorem~\ref{thm:centralplanner} twice and conclude
\begin{align*}
\sumsub{i, t : x_i^t \in H}{s_0^t = s_1^t = 1} p_i^t
&\geq \sum_{t :\xot \in H} s^t p^t + \sum_{t : \xit \in H} s^t p^t  \\
&\geq 2 \sum_{t : \xot, \xit \in H} p^t - \REGRET\\
&= \sum_{t : \xot, \xit \in H} \of{p_0^t + p_1^t} - \REGRET\\
&= \OPTH - \REGRET
\end{align*}
in expectation.
\end{proof}

\subsection{Collaborative filtering}

\newcommand{\filteringalg}{\text{FILTER}}

Our algorithm $\filteringalg$ for collaborative filtering is a nearly direct application of $\plannersalg$.
Formally, it works as follows.

\begin{itemize}
\item Initialize a copy of $\plannersalg$.
\item For $t = 0, 1, \ldots$:
\begin{itemize}
\item If you are $\xit$, output $s_1^t = 1$.
\item If you are $\xot$:
\begin{itemize}
\item Consult $\plannersalg\of{\xot, \xit}$
to find the probability $s^t$ with which you should interact.
\item Using the random beacon, output $s_0^t = 1$ with probability $s^t$
and $s_0^t = 0$ otherwise. Broadcast the result.
\item If you interact, broadcast your payoff $p_0^t$. Otherwise
broadcast $0$.
\end{itemize}
\item Report a payoff of $-\rho$
to your instance of $\plannersalg$
if $\xot$ broadcasts a payoff outside of $[-\rho, \rho]$
or if you disagree about whether an interaction should have occurred
(based on your independent use of the random beacon).
\item Otherwise, report a payoff of $p^t = p_0^t / s^t$
to your instance of $\plannersalg$.
\end{itemize}
\end{itemize}

\begin{theorem}
If $p_1^t = 0$ in any interaction between honest users,
then the algorithm $\filteringalg$ achieves a payoff
\[ \PtH \geq \OPTH - \REGRET \]
\end{theorem}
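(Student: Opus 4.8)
The plan is to mirror the proof for symmetric payoffs, but to invoke Theorem~\ref{thm:centralplanner} only with $i = 0$. In the collaborative filtering setting the resource side contributes nothing to the honest welfare (resources always receive $p_1^t = 0$), so all of the honest payoff flows through player $0$, and the relevant guarantee is the lower bound on $p_0^{\leq T}\of{H}$ rather than on the symmetrized payoff. First I would observe, exactly as in the symmetric reduction, that every honest user feeds its copy of $\plannersalg$ identical inputs: the reported payoffs depend only on the public broadcasts and the shared random beacon, so by induction all honest instances stay synchronized and we may speak of \emph{the} instance of $\plannersalg$ and of a common recommendation $s^t$.

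Next I would verify the two hypotheses of Theorem~\ref{thm:centralplanner} for the reported payoffs $p^t$. By construction $p^t$ is either $p_0^t / s^t$ (reported with probability at most $s^t$, when the round proceeds normally), or $-\rho$, or $0$. In each case $p^t s^t \in [-\rho, \rho]$, since $p^t s^t$ equals $p_0^t \in [-\rho,\rho]$, or $-\rho\, s^t \in [-\rho, 0]$, or $0$; the $1/s^t$ reweighting causes no trouble precisely because the factor of $s^t$ cancels. Likewise the estimator is unbiased: on a normal round $\E{p^t} = \E{p_0^t} \in [-\rho, \rho]$, while a round triggering the fallback contributes $\E{p^t} \in [-\rho, 0]$, so $\E{p^t}$ stays within a bounded multiple of $\rho$ and Theorem~\ref{thm:centralplanner} applies.

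The final step is to translate the planner's guarantee back into actual payoffs. For the benchmark, only interactions with $\xot, \xit \in H$ matter, and these involve an honest resource, so nothing ever goes wrong and $\E{p^t} = \E{p_0^t}$; since $p_1^t = 0$ this gives $\E{\text{OPT}^{\leq T}_0\of{H}} = \E{\OPTtH}$. For the realized payoff, I would argue as in the symmetric case that in every round with $\xot \in H$ the honest user's expected payoff is at least $s^t \E{p^t}$: when the round is normal the two are equal, and otherwise $p^t = -\rho$ while the user's true payoff is at least $-\rho$ times the probability of interacting, which is $\geq s^t p^t$. Summing over honest users (the resource side contributing zero) gives $\E{\PtH} \geq \E{p_0^{\leq T}\of{H}}$, and chaining this with Theorem~\ref{thm:centralplanner} for $i = 0$ yields $\E{\PtH} \geq \E{\OPTtH} - \REGRET$.

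The main thing to get right, though it is routine, is the bookkeeping for rounds involving a \emph{dishonest} resource: the fallback report $p^t = -\rho$ must be shown to \emph{under}-state the honest user's true payoff rather than over-state it, and simultaneously $p^t s^t$ and $\E{p^t}$ must be kept in range despite the importance weighting. Both are handled by the same $-\rho$ fallback and exact cancellation used in the symmetric-payoff reduction, so no new idea is required.
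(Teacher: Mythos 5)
Your proof is correct and takes essentially the same route as the paper's: verify that all honest instances of $\mathrm{PLAN}$ stay synchronized, check the hypotheses $p^t s^t \in [-\rho,\rho]$ and $\E{p^t}$ bounded, invoke Theorem~\ref{thm:centralplanner} with $i=0$ only, and use unbiasedness of the importance-weighted report on rounds with both parties honest to identify the planner's benchmark with $\OPTH$. The one (harmless) difference is that you route honest-$\xot$ rounds through a $-\rho$-fallback inequality borrowed from the symmetric case, whereas the paper uses the exact identity $\E{s^t p^t} = \E{s_0^t s_1^t p_0^t}$; in $\mathrm{FILTER}$ only $\xot$ broadcasts a payoff, so the fallback can never fire in a round where $\xot$ is honest (a dishonest resource can only set $s_1^t = 0$, yielding a report of $0$, not $-\rho$), and your inequality, while valid, guards against a case that cannot occur.
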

\begin{proof}
It is easy to check that the honest user's instantiations of $\plannersalg$
remain synchronized, that $\E{p^t} \in [-\rho, \rho]$, and that $p^t s^t \in [-\rho, \rho]$.
In an interaction where $\xot$ is honest,
it is easy to see that
$\E{s^t p^t} = \E{s^t_0 s_1^t p^t_0}$.
In an interaction where $\xot, \xit$ are both honest,
then $s_1^t = 1$ and we have
$\E{p^t} = p^t_0$.

So we have in expectation
\begin{align*}
\sumsub{i, t \leq T : x_i^t \in H }{s_0^t = s_1^t = 1} p_i^t
&= \sum_{t \leq T : x_0^t \in H} s^t p^t \\
&\geq \sum_{t \leq T : \xot, \xit \in H} p^t - \REGRET\\
&= \sum_{t \leq T: \xot \xit \in H} p_0^t - \REGRET\\
&= \OPTH - \REGRET,
\end{align*}
as desired.
\end{proof}

\subsection{Transfers}\label{transferreduction}

\newcommand{\tinstance}{\mathcal{I}_{\text{transfers}}}
\newcommand{\sinstance}{\mathcal{I}_{\text{symmetric}}}
If the payoffs aren't symmetric, but users are allowed to make transfers,
then we can reduce to the symmetric case.
That is, given an instance $\tinstance$ of the general problem with transfers, with payoffs $p^t_i$,
we will produce a new instance $\sinstance$ of the symmetric problem, with payoffs $q^t_i$.
We will guarantee that if the uses perform well on this synthetic instance $\sinstance$,
then they also performed well on $\tinstance$.
Since we know that $\symmetricalg$ performs well on $\sinstance$,
this will yield the desired result.

Whenever an honest user $x_i^t$ has to make a decision in $\tinstance$
they ask the user $x_i^t$ to make the same decision in $\sinstance$.
Whenever an honest user is given a payoff $p_i^t$ in $\tinstance$,
they perform the following process in order to generate the payoff $q_i^t$
in $\sinstance$:
\begin{itemize}
\item Send the message $p_i^t$ to $x_{1-i}^t$,
and receive the message $p_{1-i}^t$.
\item If no message is received, or multiple messages are received,
or the message contains a payoff $p_{1-i}^t \not \in [-\rho, \rho]$,
then set $\tau_i^t = 0$.
\item Otherwise set $\tau_i = \max\b{\frac {p_{1-i}^t - p_i^t}{2}, 0}$.
\item Report $q_i^t = p_i^t + \tau_{1-i} - \tau_i$.
\end{itemize}
It is easy to verify that in any interaction between honest users we have
\[ q_0^t = q_1^t = \frac {p_0^t + p_1^t}{2}.\]
This allows us to apply the algorithm $\symmetricalg$ to the instance $\sinstance$.
\begin{theorem}
There is an algorithm attaining the payoff
\[ \PtH \geq \OPTH - \REGRET \]
in general if transfers are available.
\end{theorem}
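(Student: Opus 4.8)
The plan is to treat the transfer construction as a black-box reduction to the symmetric setting and then invoke the symmetric-payoffs theorem on the synthetic instance $\sinstance$. The only substantive work is to check that (i) the honest users' realized payoff in $\tinstance$ coincides with their payoff in $\sinstance$, and that (ii) the benchmark $\OPTH$ is preserved by the reduction; everything else is inherited from the analysis of $\symmetricalg$ already given above (including the fact that the honest users' copies of the underlying planner stay synchronized).

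First I would record the payoff identity. Since each honest user $x_i^t$ makes the same interact/abstain decision in $\tinstance$ that its copy of $\symmetricalg$ prescribes in $\sinstance$, the set of rounds in which a given honest user actually interacts is identical in the two instances. In a round where honest user $x_i^t$ interacts, its realized payoff in $\tinstance$ is its base payoff $p_i^t$ together with the transfer adjustment $\tau_{1-i}^t - \tau_i^t$, which is exactly the quantity $q_i^t = p_i^t + \tau_{1-i}^t - \tau_i^t$ it reports in $\sinstance$. Summing over honest users and over the rounds in which they interact therefore shows that $\PtH$ in $\tinstance$ equals the aggregate honest payoff in $\sinstance$.

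Next I would apply the symmetric-payoffs theorem to $\sinstance$. It has already been verified that $q_0^t = q_1^t = (p_0^t + p_1^t)/2$ in every interaction between two honest users, so $\sinstance$ satisfies that theorem's hypothesis; and in such a round $q_0^t + q_1^t = p_0^t + p_1^t$, so the benchmark is unchanged, i.e.\ $\OPTH$ computed in $\sinstance$ equals $\OPTH$ in $\tinstance$. The theorem then yields aggregate honest payoff in $\sinstance$ at least $\OPTH - \REGRET$, and combining this with the identity of the previous paragraph gives $\PtH \geq \OPTH - \REGRET$ in $\tinstance$, as desired.

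The one point needing care---and which I expect to be the main obstacle---is the range of the synthetic payoffs $q_i^t$ in interactions between an honest and a dishonest user, since the symmetric-payoffs theorem (through Theorem~\ref{thm:centralplanner}) is stated for payoffs bounded by $\rho$. Here the honest user's own transfer satisfies $0 \leq \tau_i^t \leq \rho$ (messages reporting a payoff outside $[-\rho, \rho]$ are discarded and give $\tau_i^t = 0$) and its base payoff is at least $-\rho$, so $q_i^t \geq -2\rho$; thus an honest user can lose at most $2\rho$ in any single round. A dishonest partner can instead send an arbitrarily large transfer $\tau_{1-i}^t$, making $q_i^t$ large and positive, but this only increases the honest user's realized payoff and hence only helps the lower bound---indeed $\symmetricalg$ already treats such an out-of-range broadcast as a failure and reports $-\rho$ to its planner, so the guarantee it produces is a valid lower bound on the (larger) realized payoff. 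Consequently the whole argument goes through with $\rho$ replaced by $\O{\rho}$, which leaves the asymptotic regret $\REGRET$ unchanged.
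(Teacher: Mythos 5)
Your proof takes essentially the same route as the paper's: identify the honest users' realized payoff (base payoffs plus transfers) in the transfer instance with the reported payoffs $q_i^t$ in the synthetic symmetric instance, observe that $q_0^t + q_1^t = p_0^t + p_1^t$ in honest--honest rounds so the benchmark $\OPTH$ is preserved, and invoke the symmetric-payoffs theorem. Your closing discussion of the range of $q_i^t$ against dishonest partners is in fact more careful than the paper's terse claim that $q_i^t \in [-\rho, \rho]$ (a dishonest partner's transfer can push $q_i^t$ outside this range, which the protocol's out-of-range handling absorbs exactly as you argue), but this changes nothing substantive.
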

\begin{proof}
We apply the reduction described in this section.
In any interaction between honest users we have 
and in any interaction we have $q_i^t \in [-\rho, \rho]$.
So we can apply $\symmetricalg$ to the resulting payoffs,
and we have in expectation
\begin{align*}
\PtH
&=\sumsub{i, t : x_i^t \in H}{s_0^t = s_1^t = 1} \of{p_i^t  + \tau_{1-i}^t - \tau_i^t} \\
&= \sumsub{i, t : x_i^t \in H}{s_0^t = s_1^t = 1} q_i^t \\
&\geq \sum_{t : \xot, \xit \in H} \of{q_0^t + q_1^t} -\REGRET\\
&= \sum_{t : \xot, \xit \in H} \of{p_0^t + p_1^t} - \REGRET,\\
&= \OPTH - \REGRET
\end{align*}
as desired.
\end{proof}

\section{Appendix: Ex ante symmetry}\label{exantereduction}

\newcommand{\exsinstance}{\mathcal{I}_{\text{ex ante}}}
As in Section~\ref{transferreduction} 
we reduce an instance $\exsinstance$ with ex ante symmetric payoffs
to an instance $\sinstance$ with symmetric payoffs.
At the beginning of the protocol
each user initializes a balance $w^0\of{x} = 1$
for every user $x$.
When $x_i^t$ receives the payoff $p_i^t$ in $\exsinstance$ they do the following
to generate the payoff $q_i^t$ in $\sinstance$:
\begin{itemize}
\item Communicate the payoff $p_i^t$ to $x_{1-i}^t$,
and hear the payoff $p_{1-i}^t$.
\item If no message is heard, or multiple messages are heard,
or the message contains a payoff $p_{1-i}^t \not \in [-\rho, \rho]$,
then set $\tau_i = 0$
\item Otherwise
$\tau_i^t = \frac {\delta w_0^t w_1^t \of{p_{1-i}^t - p_i^t}}{2 \rho \of{w_0^t + w_1^t}}$,
where $w_i^t = w^t\of{x_i^t}$:
\item Each user updates $w^{t+1}\of{x_i^t} = w^t\of{x_i^t} + \tau^t_i$.
For $x \not\in \b{\xot, \xit}$, $w^{t+1}\of{x} = w^t\of{x}$.
\item Report the payoff $q_i^t = p_i^t + \frac {2 \rho}{\delta} \frac {\tau^t_i}{w^t_i}$.
\end{itemize}
Here $\delta < \frac 12$ is a parameter of the reduction.

It is easy to verify that in any interaction between honest users,
both users will report a payoff of 
\[ q_0^t = q_1^t = \frac {w_0 p_0^t + w_1 p_1^t}{w_0 + w_1}. \]
This is very similar to the protocol with transfers,
except that the payoff of an interaction is now a weighted average of the individual payoffs.
Once again, we can apply the algorithm $\symmetricalg$ to $\sinstance$.
The aim the analysis will be to show that if the payoff in $\sinstance$
is good, then so is the payoff in $\exsinstance$,
which is less straightforward than in the case with transfers.

It is easy to verify that $\abs{\tau_i^t} \leq \delta w_i^t$,
so that each $w^t\of{x}$ changes by at most a factor of $(1 \pm \delta)$ in any round.
Since $\delta < \frac 12$, this guarantees that $w^t\of{x} > 0$.

The mechanism used to track the quantities $w^t\of{x}$
is not particularly important;
for simplicity, and because our algorithm is already
hopelessly undistributed, we assumed that every user independently
tracks the quantities $w^t\of{x}$ for every other user.
It would be straightforward, however, to use a more efficient implementation
of digital cash such as \cite{creditnetworks}.\footnote{Though this introduces
the possibility of attacks on the payment infrastructure itself.} In 
this case a dishonest player might decline to make an appropriate transfer,
but this can be handled in the same way as in Section~\ref{transferreduction}.

To analyze this protocol we rely on the potential functions
\[ U^T(x) = \frac {2\rho}{\delta} \logo{w^T(x)} + p^{< T}(x) \]
We use three lemmas to relate the reported payoffs $q_i^t$ to the actual
payoffs $p_i^t$.

Our first lemma relates the reported payoffs $q_i^t$ to the underlying payoffs $p_i^t$:
\begin{lemma}\label{exantereduction1}
Under the ex ante symmetry condition,
in any interaction between honest users we have
\[\E{q_0^t + q_1^t} = \E{p_0^t + p_1^t}.\]
\end{lemma}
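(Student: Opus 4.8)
The plan is to reduce everything to the explicit closed form for the reported payoffs derived just before the lemma, namely
\[ q_0^t = q_1^t = \frac{w_0^t p_0^t + w_1^t p_1^t}{w_0^t + w_1^t}, \]
which holds in any interaction between two honest users (with $w_i^t = w^t\of{\xot}$ and $w_1^t = w^t\of{\xit}$, both strictly positive). Summing the two identical quantities gives $q_0^t + q_1^t = \frac{2\of{w_0^t p_0^t + w_1^t p_1^t}}{w_0^t + w_1^t}$, so the whole claim becomes a statement about a weighted average of $p_0^t$ and $p_1^t$ whose weights depend only on the current wealth levels.

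The observation I would emphasize is a measurability one. The balances $w_0^t$ and $w_1^t$ are determined entirely by the transcript $K^{<t}$ of events strictly before round $t$, together with the identities $\xot, \xit$, since each $w^t\of{x}$ is obtained from the initial value $1$ by accumulating the transfers $\tau$ made in rounds $0, \dots, t-1$. Consequently, once we condition on $\xot, \xit$ and $K^{<t}$, the weights $w_0^t, w_1^t$ are constants and can be pulled outside the conditional expectation.

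First I would condition on $\xot, \xit, K^{<t}$ and use linearity to write
\[ \Ec{q_0^t + q_1^t}{\xot, \xit, K^{<t}} = \frac{2\of{w_0^t \,\Ec{p_0^t}{\xot, \xit, K^{<t}} + w_1^t \,\Ec{p_1^t}{\xot, \xit, K^{<t}}}}{w_0^t + w_1^t}. \]
The ex ante symmetry condition $\Ec{p_0^t - p_1^t}{\xot, \xit, K^{<t}} = 0$ says the two conditional means coincide; calling their common value $\mu$, the weighted average degenerates and the right-hand side equals $2\mu$. On the other side, $\Ec{p_0^t + p_1^t}{\xot, \xit, K^{<t}} = 2\mu$ as well, so the two conditional expectations agree pointwise. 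Taking the outer expectation over $\xot, \xit, K^{<t}$ (the tower property) then yields the unconditional identity $\E{q_0^t + q_1^t} = \E{p_0^t + p_1^t}$.

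I do not expect a serious obstacle here: the only point requiring care is making the measurability of $w_0^t, w_1^t$ with respect to $K^{<t}$ explicit, since this is exactly what justifies extracting the weights from the conditional expectation and is what makes the ex ante symmetry collapse the weighted average into the plain average. Everything else is linearity of conditional expectation together with the closed form already established for honest--honest interactions.
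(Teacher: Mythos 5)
Your proposal is correct and matches the paper's own proof in essence: the paper likewise conditions so that the wealths $w_0^t, w_1^t$ are fixed, writes $q_0^t + q_1^t$ as twice the weighted average $\beta_0 p_0^t + \beta_1 p_1^t$ with $\beta_i = w_i^t/(w_0^t + w_1^t)$, and uses ex ante symmetry plus $\beta_0 + \beta_1 = 1$ to collapse it to $\E{p_0^t + p_1^t}$. Your version merely makes explicit the measurability of the weights with respect to $K^{<t}$ and the final tower-property step (the paper also notes conditioning on the users' decision to interact, which your conditioning on $K^{<t}$ and the beacon-determined $s^t$ handles implicitly).
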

\begin{proof}
Under the ex ante symmetry condition, conditioned on the values of $w^t_0$ and $w^t_1$,
and on the users' decision to interact,
we have $\E{p_0^t} = \E{p_1^t}$.
Let $\beta_0 = \frac {w^t_0}{w^t_0 + w^t_1}$ and $\beta_1 = \frac {w^t_1}{w^t_0 + w^t_1}$.
Then we have
\begin{align*}
\E{q_0^t + q_1^t}
&= 2 \E{\beta_0 p_0^t + \beta_1 p_1^t} \\
&= 2 \beta_0 \E{p_0^t} + 2 \beta_1 \E{p_1^t} \\
&= \beta_0\E{p_0^t + p_1^t} + \beta_1 \E{p_0^t + p_1^t} \\
&= \E{p_0^t + p_1^t}
\end{align*}
as desired.
\end{proof}

\newcommand{\xiit}{x_i^t}
Our second lemma relates the reported payoffs $q_i^t$ to the change in $U^t\of{x}$:
\begin{lemma}\label{exantereduction2}
If $s_0^t = s_1^t = 1$, then
then $U^{t+1}\of{x_i^t} \geq U^t\of{x_i^t} + q_i^t - \O{\rho \delta}$.
\end{lemma}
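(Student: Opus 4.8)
The plan is to expand the potential difference directly and reduce the whole statement to a single elementary inequality for the logarithm. Since $s_0^t = s_1^t = 1$, the interaction occurs and $x_i^t$ receives the actual payoff $p_i^t$, so the payoff term in $U$ increases by exactly $p_i^t$. The log-wealth term contributes $\frac{2\rho}{\delta}\of{\logo{w^{t+1}\of{x_i^t}} - \logo{w^t\of{x_i^t}}}$, and substituting $w^{t+1}\of{x_i^t} = w_i^t + \tau_i^t$ turns this into $\frac{2\rho}{\delta}\logo{1 + \tau_i^t/w_i^t}$. Hence
\[ U^{t+1}\of{x_i^t} - U^t\of{x_i^t} = p_i^t + \frac{2\rho}{\delta}\logo{1 + \frac{\tau_i^t}{w_i^t}}. \]

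First I would subtract the reported payoff $q_i^t = p_i^t + \frac{2\rho}{\delta}\frac{\tau_i^t}{w_i^t}$. The payoff terms cancel, and writing $z = \tau_i^t/w_i^t$ the remaining gap is
\[ U^{t+1}\of{x_i^t} - U^t\of{x_i^t} - q_i^t = \frac{2\rho}{\delta}\of{\logo{1 + z} - z}. \]
This is exactly the point of defining the effective payoff through log-wealth: the quantity $q_i^t$ is precisely the first-order approximation of the true change in log-wealth, so the discrepancy between them is nothing but the second-order Taylor error of the logarithm.

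The only nontrivial step—and the crux of the argument—is to bound that error. We have already observed that $\abs{\tau_i^t} \leq \delta w_i^t$, so $\abs{z} \leq \delta < \frac12$. I would then invoke the elementary inequality $\logo{1+z} \geq z - z^2$, valid for every $z \geq -\frac12$; this follows by checking that $f\of{z} = \logo{1+z} - z + z^2$ has $f\of{0} = 0$ and $f'\of{z} = \frac{z\of{1+2z}}{1+z}$, whose sign matches that of $z$ on $\of{-\tfrac12, \infty}$, so $f$ is minimized at $z = 0$. Substituting, $\logo{1+z} - z \geq -z^2 \geq -\delta^2$, and therefore
\[ U^{t+1}\of{x_i^t} - U^t\of{x_i^t} - q_i^t \geq -\frac{2\rho}{\delta}\,\delta^2 = -2\rho\delta = -\O{\rho\delta}, \]
which is the claimed bound. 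Note how the $\delta^{-1}$ scaling on the log-wealth cancels one factor of $\delta$ from the quadratic error, leaving exactly the advertised $\O{\rho\delta}$ slack; this is the calculation that justifies choosing that scaling in the reduction.
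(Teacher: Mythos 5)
Your proof is correct and follows essentially the same route as the paper's: expand $U^{t+1}\of{x_i^t} - U^t\of{x_i^t}$ into the payoff increment plus $\frac{2\rho}{\delta}\logo{1 + \tau_i^t/w_i^t}$, then absorb the second-order error of the logarithm using $\abs{\tau_i^t/w_i^t} \leq \delta$. The only difference is cosmetic: the paper simply cites the inequality $\logo{1+z} > z - \delta^2$ for $\abs{z} \leq \delta \leq \frac 12$, while you supply an explicit verification via $\logo{1+z} \geq z - z^2$, which is a fine (and slightly more self-contained) way to do the same step.
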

\begin{proof}
We have
\begin{align*}
U^{t+1}\of{x_i^t} 
&= \frac {2\rho}{\delta} \logo{w^{t+1}\of{\xiit}} + p^{<t+1}\of{\xiit}\\
&= U^t\of{x_i^t} + \frac {2\rho}{\delta} \logo{\frac {w^{t+1}\of{\xiit}}{w^t_i}} + p^t\of{\xiit} \\
&= U^t\of{\xiit} + \frac {2\rho}{\delta} \logo{1 + \frac {\tau^t_i}{w^t_i}} + p^t\of{\xiit} \\
&> U^t\of{\xiit} + \frac {2\rho}{\delta} \frac {\tau^t_i}{w^t_i} - 2 \rho \delta + p^t\of{\xiit} \\
&= U^t\of{\xiit} + q^t_i - 2 \rho \delta
\end{align*}
where we have used the bound $\abs{\frac {\tau^t_i}{w^t_i}} \leq \delta$
and the inequality $\logo{1 + z} > z - \delta^2$ for $\abs{z} \leq \delta \leq \frac 12$.
\end{proof}

Our third lemma relates the quantities $U^T\of{x}$ to the payoffs $p^{\leq T}\of{x}$:
\begin{lemma}\label{exantereduction3}
\[ \PtH \geq \sum_{x \in H} U^T\of{x} - \frac {2 \rho}{\delta} \abs{H} \logo{\frac {N}{\abs{H}}} \]
\end{lemma}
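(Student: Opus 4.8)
The plan is to reduce the claimed inequality to a purely combinatorial bound on the total logarithmic wealth of the honest users, and then to establish that bound by combining conservation of total wealth with the concavity of the logarithm. First I would expand the right-hand side: by definition $\sum_{x \in H} U^T\of{x} = \frac{2\rho}{\delta} \sum_{x\in H} \logo{w^T\of{x}} + \sum_{x \in H} p^{<T}\of{x}$, and the second sum is exactly the cumulative honest payoff $\PtH$, since in the \emph{ex ante} reduction the currency $w$ is purely internal bookkeeping and confers no real payoff (unlike the transfers of Section~\ref{transferreduction}), so the true payoff of each honest user is just the sum of the underlying $p_i^t$. Hence the lemma is equivalent to showing $\frac{2\rho}{\delta}\sum_{x\in H}\logo{w^T\of{x}} \le \frac{2\rho}{\delta}\abs{H}\logo{\frac{N}{\abs{H}}}$, i.e. to the inequality $\sum_{x\in H}\logo{w^T\of{x}} \le \abs{H}\logo{N/\abs{H}}$.

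Next I would prove the wealth-conservation invariant $\sum_{x\in\users} w^t\of{x} = N$ for every $t$. Each user starts at $w^0\of{x} = 1$, so the total is $N$ initially. In each round the only two ledger entries that change are $w^{t+1}\of{\xot} = w^t\of{\xot} + \tau_0^t$ and $w^{t+1}\of{\xit} = w^t\of{\xit} + \tau_1^t$, and the transfer formula gives $\tau_0^t + \tau_1^t = 0$: the numerators $p_1^t - p_0^t$ and $p_0^t - p_1^t$ are negatives of one another while the common factor $\frac{\delta w_0^t w_1^t}{2\rho(w_0^t + w_1^t)}$ is symmetric in the two players. Thus every round either applies equal-and-opposite transfers or (in the malformed-broadcast branch) applies no transfer at all, and in both cases the total is unchanged. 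Combined with the positivity $w^t\of{x} > 0$ already noted (which follows from $\abs{\tau_i^t}\le\delta w_i^t$ and $\delta < \tfrac12$), this yields $\sum_{x\in H} w^T\of{x} \le \sum_{x\in\users} w^T\of{x} = N$.

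Finally I would apply Jensen's inequality to the concave function $\logo{\cdot}$: averaging over the $\abs{H}$ honest users gives $\frac{1}{\abs{H}}\sum_{x\in H}\logo{w^T\of{x}} \le \logo{\frac{1}{\abs{H}}\sum_{x\in H} w^T\of{x}} \le \logo{N/\abs{H}}$, where the last step uses monotonicity of $\log$ together with the wealth bound $\sum_{x\in H} w^T\of{x} \le N$. Multiplying through by $\frac{2\rho}{\delta}\abs{H}$ and recombining with the payoff terms recovers $\PtH \geq \sum_{x \in H} U^T\of{x} - \frac{2\rho}{\delta}\abs{H}\logo{\frac{N}{\abs{H}}}$.

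The main obstacle is not the closing Jensen step but making the conservation argument watertight in the adversarial setting. One must check that the zero-sum property of the transfers, and the global consistency of the wealth ledger, survive even when one of the interacting parties is dishonest and may broadcast arbitrary or malformed payoffs. Because every wealth update is driven by publicly broadcast payoffs, all honest observers compute identical ledgers, and a publicly visible malformed broadcast forces both transfers to $0$; it is precisely this observation that prevents manipulators from inflating the honest users' aggregate wealth and thereby breaking the bound $\sum_{x\in H} w^T\of{x} \le N$ on which the whole argument rests.
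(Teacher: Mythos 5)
Your proof is correct and follows essentially the same route as the paper's: conservation of total wealth (zero-sum transfers from $w^0\of{x}=1$) plus positivity gives $\sum_{x \in H} w^T\of{x} \leq N$, Jensen's inequality on $\log$ gives $\sum_{x \in H} \logo{w^T\of{x}} \leq \abs{H}\logo{N/\abs{H}}$, and unpacking $U^T\of{x}$ yields the claim. The extra care you take in verifying $\tau_0^t + \tau_1^t = 0$ and ledger consistency under malformed broadcasts is detail the paper leaves implicit, not a departure in approach.
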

\begin{proof}
$\sum_x w^t\of{x}$ is conserved by the algorithm, and $w^t\of{x} \geq 0$,
so 
\[ \sum_{x \in H} w^T\of{x} \leq \sum_x w^T\of{x} = \sum_x w^0\of{x} = N.\]
By Jensen's inequality, $\sum_x \logo{w^T\of{x}} \leq \abs{H} \logo{\frac {N}{\abs{H}}}$.
We have
\begin{align*}
\PtH 
&= \sum_{x \in H} U^T\of{x} - \frac {2\rho}{\delta} \sum_{x \in H} \logo{w^{T}\of{x}} \\
&\geq \sum_{x \in H} U^T\of{x} - \frac {2\rho}{\delta} \abs{H}\logo{\frac {N}{\abs{H}}},
\end{align*}
as desired.
\end{proof}

Combining these three lemmas yields our desired result:
\begin{theorem}\label{exantesymmetry}
There is an algorithm that obtains regret
\[ \PtH \geq \OPTH - \REGRET \]
assuming that the payoffs between honest users satisfy ex ante symmetry.
\end{theorem}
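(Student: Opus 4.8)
The plan is to chain the three lemmas together with the symmetric-payoff guarantee, passing from the synthetic instance $\sinstance$ back to the true instance $\exsinstance$. First I would run $\symmetricalg$ on $\sinstance$. Since the reduction forces $q_0^t = q_1^t$ in every interaction between honest users, and since $q_i^t = p_i^t + \frac{2\rho}{\delta}\frac{\tau_i^t}{w_i^t} \in [-3\rho, 3\rho]$ (using $\abs{\tau_i^t} \leq \delta w_i^t$), which only costs constant factors, our guarantee for $\symmetricalg$ in the symmetric-payoff case applies to the $q$-payoffs and yields, in expectation,
\[ \sum_{i, t : x_i^t \in H,\, s_0^t = s_1^t = 1} q_i^t \geq \sum_{t : \xot, \xit \in H} \of{q_0^t + q_1^t} - \REGRET. \]
I would then invoke Lemma~\ref{exantereduction1} to rewrite the right-hand benchmark: taking expectations and using the tower property, $\E{\sum_{\xot, \xit \in H} \of{q_0^t + q_1^t}} = \E{\OPTH}$, so the honest users' $q$-payoff is at least $\OPTH - \REGRET$ in expectation.

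The remaining work is to convert this lower bound on $\sum q_i^t$ into a lower bound on the true payoff $\PtH$, which is exactly the role of the potentials $U^T\of{x}$. I would sum Lemma~\ref{exantereduction2} over each honest user's interactions: because $U^t\of{x}$ is unchanged in any round where $x$ does not participate and interact, and because $U^0\of{x} = \frac{2\rho}{\delta}\logo{1} + 0 = 0$, the increments telescope to
\[ \sum_{x \in H} U^T\of{x} \geq \sum_{i, t : x_i^t \in H,\, s_0^t = s_1^t = 1} q_i^t - \O{\rho \delta} \cdot \sum_{x \in H} m_x, \]
where $m_x$ counts the interactions involving $x$. Since each round contributes at most $2$ to $\sum_{x} m_x$, the error is $\O{\rho \delta T}$. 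Applying Lemma~\ref{exantereduction3} then passes from $\sum_{x \in H} U^T\of{x}$ down to $\PtH$ at the cost of $\frac{2\rho}{\delta}\abs{H}\logo{N/\abs{H}}$.

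Combining these in expectation gives
\[ \PtH \geq \OPTH - \REGRET - \O{\rho \delta T} - \frac{2\rho}{\delta}\abs{H}\logo{\frac{N}{\abs{H}}}. \]
To finish I would bound $\abs{H}\logo{N/\abs{H}} = \O{N}$ (this function is maximized near $\abs{H} = N/e$), so the liquidity error becomes $\O{\rho N / \delta}$ uniformly in $H$, and then optimize the single, $H$-independent learning rate $\delta = \sqrt{N/T}$. Both the liquidity term and the $\rho \delta T$ term then equal $\O{\rho\sqrt{NT}}$, which is dominated by $\REGRET = \O{\rho\sqrt[3]{NT^2}}$ in the relevant regime $T \gtrsim N$, giving the stated bound.

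The main obstacle I anticipate is the bookkeeping in the telescoping step --- correctly restricting the sum of increments to the rounds where each honest user actually interacts, and verifying that the two $\delta$-dependent error terms genuinely absorb into $\REGRET$ for a rate chosen without knowledge of $H$. I would also be careful with the order of the combination: the $\symmetricalg$ bound and Lemma~\ref{exantereduction1} hold only in expectation, whereas the potential-telescoping from Lemma~\ref{exantereduction2} and the passage in Lemma~\ref{exantereduction3} hold deterministically, so the expectation must be taken at the outermost step.
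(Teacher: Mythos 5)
Your proposal is correct and follows essentially the same route as the paper's own proof: run $\symmetricalg$ on the synthetic instance $\sinstance$, convert the benchmark via Lemma~\ref{exantereduction1}, telescope the potentials $U^T\of{x}$ via Lemma~\ref{exantereduction2} (noting $U^0\of{x} = 0$ and that each round contributes $\O{\rho\delta}$ error at most twice), and descend to $\PtH$ via Lemma~\ref{exantereduction3}, with the rate $\delta = \sqrt{N/T}$ making both error terms $\O{\rho\sqrt{NT}}$ (the paper's ``$\delta = \sqrt{NT}$'' is evidently a typo for exactly your choice). The only points you leave implicit are handled by the paper in one line each: the case $T \leq N$ is trivial since the maximum gap is $2\rho T = \REGRET$ there, and the assumption that $T$ is known in advance (needed to set $\delta$) is removed by the standard doubling trick as in Theorem~\ref{thm:centralplanner}.
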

\begin{proof}
We will assume that $T$ is known in advance,
and choose $\delta = \sqrt{NT}$; this assumption
can then be removed by a standard doubling trick exactly as in
Theorem~\ref{thm:centralplanner}.

In any interaction between honest users we have $q_0^t = q_1^t$.
Moreover, $q_i^t \in [-\rho, \rho]$.
Thus we can apply $\symmetricalg$ and Lemma~\ref{exantereduction1}
and obtain in expectation
\begin{align*}
\sumsub{t, i : \xiit \in H}{s_0^t = s_1^t = 1} q_i^t
&\geq \sum_{t : \xot, \xit \in H}\of{ q_0^t + q_1^t} - \REGRET \\
&= \sum_{t : \xot, \xit \in H} \of{p_0^t + p_1^t} - \REGRET.
\end{align*}
Then we can apply Lemma~\ref{exantereduction2} and Lemma~\ref{exantereduction3}
and obtain in expectation
\begin{align*}
\PtH
&\geq \sum_{x \in H} U^T\of{x} - \frac {2\rho}{\delta} \abs{H} \logo{\frac {N}{\abs{H}}} \\
&\geq \sum_{x \in H} U^T\of{x} - \frac {2\rho N}{\delta}  \\
&= \sum_{x \in H} U^T\of{x} - \O{ \rho \sqrt{N T}} \\
&\geq \sum_{x \in H} U^0\of{x} + \sumsub{i, t : x_i^t \in H}{s_0^t = s_1^t = 1} q^t_i - \rho \delta T - \O{\rho \sqrt{N T }} \\
&= \sumsub{i, t : x_i^t \in H}{s_0^t = s_1^t = 1} q^t_i - \O{\rho \sqrt{N T }} \\
&\geq \sum_{t : \xot, \xit \in H} \of{p_0^t + p_1^t} - \REGRET - \O{\rho \sqrt{N T}} \\
&\geq \OPTH - \REGRET
\end{align*}
where the last inequality assumes $T > N$. If $T \leq N$ the result is trivial, since 
in this case the maximum possible difference between $\PtH$ and $\OPTH$ is $2 \rho T = \REGRET$ .
\end{proof}

\newcommand{\OPTHone}{\text{OPT}^{\leq T}\of{H_1}}
\newcommand{\OPTHtwo}{\text{OPT}^{\leq T}\of{H_2}}
\newcommand{\PtHone}{p^{\leq T}\of{H_1}}
\newcommand{\PtHtwo}{p^{\leq T}\of{H_2}}
\section{Appendix: Lower bounds}

In this section we consider the game described in Section~\ref{centralplanner}.
We show that for any $T$, there is a set $H$ and a strategy for nature such that
the planner's payoff is
\[ \OPTH - \Omega\of{\sqrt{NT}} \]
These results extend immediately to all of our other settings,
because an algorithm for any of these settings could be trivially
used by the central planner (who could simply simulate the interactions of the users
and recommend that two users interact with whatever probability they interacted
in the simulation).

To prove this result we consider a randomized strategy for nature
such that the expected payoff of every user is $0$.
We then consider the best possible partition of the users into two sets
$H_1, H_2$, and show that $\OPTHone + \OPTHtwo = \Omega\of{\sqrt{N T}}$.
This implies that for some choices by nature, and for one of the sets $H_i$,
the payoff of the central planner is less than $\text{OPT}^{\leq T}\of{H_i} - \Omega\of{\sqrt{TN}}$.

\begin{theorem}\label{lower-bound}
For every algorithm for the central planner and every $T$,
there exists a set of payoffs and a set $H$ such that
\[ \OPTH - o\of{\sqrt{NT}} \]
in expectation.
\end{theorem}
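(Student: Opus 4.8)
The plan is to exhibit a randomized strategy for nature under which every user has expected payoff $0$, yet the ``optimal partition'' benchmark is large in expectation, forcing any algorithm to pay a price for statistical identification. First I would specify nature's behavior: in each round pick the pair $\xot,\xit$ according to some fixed schedule (say a random matching, or sampling pairs uniformly) and draw symmetric payoffs with $p_0^t=p_1^t$, each $\pm\rho$ with equal probability independent of everything before. Because the payoffs are mean-zero, $\E{p^t}=0$ conditioned on the past, so no algorithm — however clever, and even with full knowledge of nature's distribution — can obtain positive expected payoff from any single interaction: the central planner's expected payoff $\sum_t \E{s^t p^t}$ is exactly $0$, since $s^t$ must be chosen before $p^t$ is revealed.

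Next I would analyze the benchmark. Consider partitioning the users into two sets $H_1,H_2=\users\setminus H_1$. The quantity $\OPTHone+\OPTHtwo$ sums $p_0^t+p_1^t=2p^t$ over all rounds in which $\xot,\xit$ lie in the same block of the partition. The key observation is that we get to choose the partition \emph{after} seeing the realized payoffs, so the relevant object is $\max_{H_1}\of{\OPTHone+\OPTHtwo}$, a maximum over $2^N$ cuts of a sum of independent mean-zero $\pm\rho$ increments. The heart of the argument is a second-moment / anticoncentration estimate: I would show that this maximum is $\Omega\of{\rho\sqrt{NT}}$ in expectation. Intuitively, over $T$ rounds there are enough independent coin flips that a well-chosen partition can align itself with the realized signs and accumulate an advantage scaling like the standard deviation, which is $\rho\sqrt{T}$ per user and $\rho\sqrt{NT}$ when aggregated appropriately across the $N$ users' degrees of freedom in the cut. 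One clean way to realize this is to arrange the interaction schedule so that different users' payoff streams are independent, reduce the cut optimization to independently maximizing $N$ (or $N/2$) one-dimensional random sums, and invoke $\E{\abs{S}}=\Omega\of{\rho\sqrt{m}}$ for a sum $S$ of $m$ independent $\pm\rho$ terms.

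Finally I would combine the two halves. Since the planner's expected total payoff summed over \emph{both} $H_1$ and $H_2$ is at most the planner's total expected payoff over all same-block interactions, which is $0$ (mean-zero payoffs), while $\E{\OPTHone+\OPTHtwo}=\Omega\of{\rho\sqrt{NT}}$, we get
\[
\E{\of{\OPTHone - \PtHone} + \of{\OPTHtwo - \PtHtwo}} = \Omega\of{\rho\sqrt{NT}}.
\]
Hence for at least one index $i\in\b{1,2}$ we have $\E{\OPTtH[H_i] - \PtH[H_i]} = \Omega\of{\rho\sqrt{NT}}$, so there is a realization of nature's coins and a choice of $H=H_i$ achieving the claimed bound. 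I would fix the random coins by averaging: since the inequality holds in expectation over nature's randomness and over the (at most two) choices of $H$, there is a deterministic set of payoffs and a fixed $H$ witnessing it, which gives the theorem.

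The main obstacle I expect is the anticoncentration step — proving the $\Omega\of{\rho\sqrt{NT}}$ lower bound on the expected best partition. The subtlety is that the $\binom{N}{2}$ interaction-indexed random variables are \emph{not} independent across cuts (flipping one vertex between blocks changes many terms simultaneously), so I cannot naively maximize coordinatewise over an unstructured sum. The cleanest fix is to choose nature's schedule to decouple the problem: e.g. fix a perfect matching on the users that is reused every round (or rotate through a small set of matchings) so that the cut's value decomposes into a sum of per-edge contributions, each a mean-zero random walk, and a single vertex's block assignment controls a bounded, essentially independent collection of edges. Getting the correct $\sqrt{N}$ rather than a weaker bound requires ensuring enough independent cut ``directions,'' and reconciling the per-user variance $\rho^2 T$ with the combinatorial gain from optimizing the partition is where the real care lies.
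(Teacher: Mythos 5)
Your proposal is correct, and its overall skeleton matches the paper's: make nature's payoffs symmetric, mean-zero $\pm\rho$ coins so that any planner's expected payoff vanishes (since $s^t$ is fixed before $p^t$ is revealed), construct a post-hoc partition $H_1, H_2$ of $\users$ whose combined benchmark is $\Omega\of{\rho\sqrt{NT}}$, and average to extract a deterministic payoff sequence and one deterministic $H_i$ with regret $\Omega\of{\rho\sqrt{NT}}$. One presentational fix: rather than saying the planner's payoff over both blocks is ``at most'' its payoff on same-block interactions, use the exact identity $p^{\leq T}\of{H_1} + p^{\leq T}\of{H_2} = \sum_t s^t p^t$, which holds for \emph{every} partition and is precisely why the expectation is zero even though your cut is chosen after seeing the payoffs --- this is how the paper handles the correlation you should otherwise worry about. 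Where you genuinely diverge is the anticoncentration step, which you correctly identify as the crux. The paper keeps nature's pairs uniformly random and builds $H_1$ greedily: it seeds $H_1$ with users $0, \ldots, N/2$, then admits each later user $x$ iff the net payoff $P_x$ of $x$'s realized interactions with earlier $H_1$-members is positive; each $P_x$ is a sum of about $T/(2N)$ independent signs, so each admitted user contributes $\Omega\of{\sqrt{T/N}}$ and $\E{\text{OPT}^{\leq T}\of{H_1}} = \Omega\of{\sqrt{NT}}$, while a conditioning argument (membership of $y$ in $H_2$ depends only on interactions disjoint from those between $x$ and $y$) gives $\E{\text{OPT}^{\leq T}\of{H_2}} = 0$. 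You instead have nature reuse a fixed perfect matching, so the cut objective decouples exactly: each edge $e$ carries an independent sum $S_e$ of $2T/N$ signs, the optimal partition joins an edge iff $S_e > 0$, and the benchmark is $\sum_e 2\max\of{S_e, 0}$ with $\E{\max\of{S_e, 0}} = \Omega\of{\rho\sqrt{T/N}}$ by Khintchine --- this fully dispatches the ``dependent cuts'' obstacle you flag, since no two edges share a vertex. Your route is arguably the more elementary and self-contained of the two, at the cost of a more contrived schedule for nature; the paper's greedy construction buys the same $\Omega\of{\rho\sqrt{NT}}$ bound against the more natural uniform-pairing adversary, replacing your structural decoupling with a martingale-style conditioning argument.
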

\begin{proof}
Consider a strategy for nature in which each $\xot$ and $\xit$ is uniformly random,
and each $p^t$ is a uniform $\pm 1$ random variable.
It is easy to see that no matter how the central planner plays in this game,
the expectation of $\PtH = 0$.

Given some enumeration of the users $0, 1, \ldots, N-1$,
we define the set $H_1$ as follows.
We include all of the users $0, 1, \ldots, N/2$ in $H_1$.
We then inductively decide whether to include $i > N/2$
in $H_1$ by evaluating the sum
\[ P_x = \sum_{\substack{t \leq T : \xit < \xot = x \\  \xit \in H_1}} p^t + \sum_{\substack{t \leq T : \xot < \xit = x \\  \xot \in H}} p^t.\]
We include $x$ in $H_1$ if and only if $P_x > 0$.

The sum defining each $P_x$ has in expectation at least $T/(2N)$ terms,
because $1/N$ of all interactions involve user $x$
and at least half of these interactions are with a user in $\abs{H_1}$.
Each of these terms is a uniform $\pm 1$ random variable.
So the probability that $P_x > 0$ is $\frac 12$ and the expectation of $P_x$, 
given that $P_x > 0$, is at least $\sqrt{T/N}$.

It is easy to see that $\OPTHone = \sum_{x \in H_1} P_x$.
Each term $P_x$ for $x \leq N/2$ has expectation 0.
In expectation there are $\Omega\of{N}$  terms with $x > N/2$.
Each of these terms is $\Omega\of{\sqrt{T/N}}$ in expectation.
Thus $\OPTHone = \Omega\of{\sqrt{TN}}$.

We define $H_2 = \users \backslash H_1$.
We have $\OPTHtwo = 0$:
for each pair $x < y$ with $x, y \in H_2$,
the event $y \in H_2$ depends only on the sum $P_y$,
which is a sum of terms independent of any interactions
occurring between $x$ and $y$.
Thus if there is an interaction between $x$ and $y$, its expected
payoff is $0$.
Summing over pairs of users in $H_2$, we obtain $\E{\text{OPT}\of{H_2}} = 0$.

Thus $\E{\OPTHone + \OPTHtwo} = \Omega\of{\sqrt{TN}}$, 
while $\E{\PtHone + \PtHtwo} = \E{p^{\leq T}\of{\users}} = 0$.
In particular, $\E{\OPTHone + \OPTHtwo - \PtHone - \PtHtwo} = \Omega\of{\sqrt{TN}}$,
and so we can find some particular strategy by nature and a set $H_i$
such that $p^{\leq T}\of{H_i} \leq \text{OPT}^{\leq T}\of{H_i} - \Omega\of{\sqrt{TN}}$, as desired.
\end{proof}

\section{Appendix: Strengthened analysis of online local learning}\label{ollsection}

\begin{lemma}\label{smoothness}\cite{me}
For any $E^{< t}$ and any $E^t$ we have
\[ \abs{\myalgo{E^{<t}} - \myalgo{E^{\leq t}}}_{\infty} \leq \O{\epsilon \abs{E^t}_1} \]
\end{lemma}
\begin{proof}
Note that the $\ell_1$ and $\ell_{\infty}$ norms are dual.
So this result follows directly from Lemma 8 in \cite{me}
and Lemma 2.19 in \cite{shai}.
\end{proof}

\begin{theorem}\cite{me}\label{oll}
For any sequence $E^t$ and any $X \succeq 0$ with entries in $[0, 1]$,
we have
\[ \sum_t \mprod{E^t}{\myalgo{E^{<t}}} \geq \sum_t \mprod{E^t}{X} - \O{\epsilon \sum_{t < T} \abs{E^t}_1^2 } - \O{N \epsilon^{-1}} \]
where $\abs{E^t}_1$ is the sum of the absolute values of the entries of $E^t$.
\end{theorem}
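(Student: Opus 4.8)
The plan is to recognize $\myalg$ as Follow-the-Regularized-Leader with the log-determinant regularizer $R(X) = \logdet{X+I}$ and learning rate $\epsilon$: by definition $\myalgo{E^{<t}} = \argmax_{X}\of{\epsilon \mprod{E^{<t}}{X} + R(X)}$ over the feasible set $\mathcal{K} = \b{X \in \Xspace : X \succeq 0,\ 0 \leq X_{(a,i)(b,j)} \leq 1}$. Writing $X^t = \myalgo{E^{<t}}$ for the iterate played before round $t$ (so that $X^0 = \myalgo{0} = \argmax_{\mathcal{K}} R$ and the ``leader'' incorporating round $t$ is $X^{t+1} = \myalgo{E^{\leq t}}$), I would proceed in three steps: (i) a standard be-the-leader telescoping that bounds the regret by $\frac{1}{\epsilon}$ times the range of $R$ over $\mathcal{K}$ plus a sum of one-step stability terms $\mprod{E^t}{X^{t+1}-X^t}$; (ii) a bound of $\O{N}$ on the range of $R$; and (iii) a bound on the stability terms using Lemma~\ref{smoothness}.

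For step (i), set $\Psi_t(X) = \epsilon\mprod{E^{<t}}{X} + R(X)$, so $X^t = \argmax_X \Psi_t(X)$ and $\Psi_{t+1}(X) = \Psi_t(X) + \epsilon\mprod{E^t}{X}$. A routine induction on $T$ — the base case is $R(X^0) \geq R(X)$ since $X^0$ maximizes $R$, and the inductive step applies the hypothesis at $X = X^{T+1}$ and adds $\epsilon\mprod{E^T}{X^{T+1}}$ to both sides — gives, for every feasible $X$,
\[ \sum_{t<T} \mprod{E^t}{X^{t+1}} + \frac{1}{\epsilon} R(X^0) \ \geq\ \sum_{t<T}\mprod{E^t}{X} + \frac{1}{\epsilon} R(X). \]
Rearranging and then adding and subtracting $\sum_{t<T}\mprod{E^t}{X^t}$ yields the decomposition
\[ \sum_{t<T} \mprod{E^t}{X} - \sum_{t<T}\mprod{E^t}{X^t} \ \leq\ \frac{1}{\epsilon}\of{R(X^0) - R(X)} + \sum_{t<T}\mprod{E^t}{X^{t+1}-X^t}. \]

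For step (ii), every $X \in \mathcal{K}$ has $X \succeq 0$, hence $X+I \succeq I$ and $R(X) \geq 0$; and writing $R(X) = \sum_i \logo{1 + \lambda_i}$ for the eigenvalues $\lambda_i$ of $X$, concavity of $\log$ together with $\sum_i \lambda_i = \tr{X} \leq 3N$ (diagonal entries are at most $1$) gives $R(X) \leq 3N\logo{1 + \tr{X}/(3N)} \leq 3N\log 2 = \O{N}$. Thus $R(X^0) - R(X) \leq \O{N}$ and the first term is $\O{N/\epsilon}$. For step (iii), the $\ell_1$ and $\ell_\infty$ norms are dual, so $\mprod{E^t}{X^{t+1}-X^t} \leq \abs{E^t}_1\,\abs{X^{t+1}-X^t}_\infty$; since $X^{t+1} = \myalgo{E^{\leq t}}$ and $X^t = \myalgo{E^{<t}}$, Lemma~\ref{smoothness} gives $\abs{X^{t+1}-X^t}_\infty = \O{\epsilon\abs{E^t}_1}$, so each stability term is $\O{\epsilon\abs{E^t}_1^2}$. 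Summing the two bounds over $t < T$ proves the theorem.

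The one place where the real content lives is step (iii): the entire point is that consecutive FTRL iterates move slowly, which is exactly the strong-smoothness statement of Lemma~\ref{smoothness} and rests ultimately on the strong convexity of $-\logdet{X+I}$ from \cite{me}. The be-the-leader induction and the eigenvalue bound on the regularizer range are routine. The only subtlety worth flagging is that we must compare against the actually-played $X^t$, which excludes the round-$t$ loss, rather than the ``cheating'' leader $X^{t+1}$; this mismatch is precisely what generates the stability correction terms, and is the reason Lemma~\ref{smoothness} is invoked at all.
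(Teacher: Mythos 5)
Your proof is correct and follows essentially the same route as the paper: the paper's proof simply cites the FTRL regret decomposition and the $\O{N \epsilon^{-1}}$ regularizer-range term as black boxes (Lemma 2.3 of \cite{shai} together with Lemma 5 of \cite{me}) and then, exactly as in your step (iii), bounds the stability terms $\mprod{E^t}{\myalgo{E^{<t}} - \myalgo{E^{\leq t}}}$ via $\ell_1$--$\ell_\infty$ duality and Lemma~\ref{smoothness}. The only difference is that you derive the be-the-leader telescoping and the eigenvalue bound $0 \leq \logdet{X+I} \leq 3N\log 2$ inline rather than by citation, which makes the argument self-contained but does not change the approach.
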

\begin{proof}
By Lemma 2.3 of \cite{shai} and Lemma 5 of \cite{me}, for any $X \succeq 0$ with entries
in $[0, 1]$ we have:
\[ \sum_t \mprod{E^t}{\myalgo{E^{<t}}} \geq \sum_t \mprod{E^t}{X} - \O{N \epsilon^{-1}}
+ \sum_t \mprod{E^t}{\myalgo{E^{<t}} - \myalgo{E^{\leq t}}}. \]
Using Lemma~\ref{smoothness} we have
\begin{align*}
\sum_t \mprod{E^t}{\myalgo{E^{<t}} - \myalgo{E^{\leq t}}}
&\geq - \sum_t \abs{E^t}_1 \abs{\myalgo{E^{<t}} - \myalgo{E^{\leq t}}}_{\infty} \\
&= \O{\sum_t \epsilon \abs{E^t}_1^2}
\end{align*}
which gives the desired result.
\end{proof}

\section{Appendix: Open questions}\label{further-work}

We have presented a protocol for managing trust
which achieves a qualitatively new---and surprisingly strong---formal guarantee.
But the theoretical picture is far from complete,
and the proposed protocol is far from practical.
We hope that our model and algorithmic approach
will inspire further work in both of these directions,
and in this section we highlight some areas for improvement and elaboration.

\begin{description}
\item[Tighter regret bounds:] Our regret bound of $\O{\rho \sqrt[3]{T^2 N}}$ can probably
be improved to $\O{\rho \sqrt{T N}}$, which translates into a significant
performance improvement.

Perhaps more importantly, our regret bound depends on the total number of interactions.
This might be improved to depend only on the number of interactions involving an honest user.
The distinction is particularly important in settings where dishonest users
can fabricate large numbers of interactions amongst themselves.

With plausible parameters and performance goals in the favor setting, 
these improvements could reduce the number
of required interactions per user from tens of thousands to hundreds\footnote.{although
the practical performance of the protocol might be better than the theoretical worst case,
as often happens}
Other performance improvements, such as distinguishing between the average
and maximal payoffs or
improving the hidden constant factors,
would also help achieve a practical guarantee.

\item[Distributed implementation:] Although our protocol is is 
tractable in the sense of polynomial-time,
it is not distributed.
Ideally each user's computation 
and communication would be polylogarithmic in the size of the community.

The main challenge is distributing the computation of the central planner,
and particularly the online learning algorithm at the core of our protocol.
The current algorithm requires semidefinite programming, 
which must be replaced
by fast distributed computation, such as random walks
or sparse linear algebra.
In addition to the algorithmic difficulties,
the computation must be robust to manipulation by adversarial users;
this might be achieved either by finding a computation which is inherently robust
or by using cryptographic tools to ensure accurate computation.

\item[Simultaneous interactions:] Our protocol assumes that interactions occur one at a time:
after one pair of users interacts, their reports immediately become available
to the next pair of users.
In most contexts this is not a realistic assumption.

When applied to a setting with $k$ simultaneous interactions,
our protocol achieves regret $\Omega\of{k}$.
It is not hard to construct examples
where the regret must necessarily be $\Omega\of{\log{k}}$.
We suspect that this lower bound is tight, 
and that some protocol achieves regret $\O{\rho \log{k} \sqrt{T N}}$.
Under realistic assumptions, it may be possible to eliminate the dependence on $k$
altogether.

\item[Recommending vs. filtering:] Our protocol answers questions of the form ``should I interact with this merchant?''
rather than questions of the form ``which merchant should I interact with?''
In many contexts the latter question is more useful---though deciding which users' input to trust
is a key difficulty for both problems.
\cite{ccl} provides a similar algorithm that answers ``which merchant should I interact with?''
but exploits the simplifying assumption that there is a single best merchant for all transactions.
It is natural to try to combine these two assumptions,
and to design a protocol which recommends one resource from amongst $k$ options.

\item[Strong signals:] In our setting, most interactions provide only weak
evidence about which users are honest.
We might expect to be able to obtain tighter bounds when very strong signals are available.
For example, a dishonest user might physically assault
or threaten an honest user, resulting in a payoff of $-\rho^*$
which could never occur in an interaction between honest users.
In this case, 
we would like to obtain a regret bound whose dependence
on $\rho^*$ was $\O{\rho^* N}$, independent of $T$.

\item[Incentive compatibility:] Our protocol has a robust theoretical guarantee,
but only assuming that there are some honest users who reliably follow the protocol.
It would be better if honesty were incentivized rather than assumed.

The honest players in our protocols report their payoff after each interaction,
and pay a cost which depends on their reported payoff.
So an honest player could always benefit by stating a lower payoff,
and has no short-term incentive to play by the rules.

It is worth noting, however, that a player who follows this selfish approach
will quickly find that no one wants to interact with them.
Though the optimal behavior involves lying,
it involves lying just enough that other users still benefit from interacting with me.
In fact, 
a community of users lying optimally can still lead to socially optimal outcomes.

Unfortunately, in an attempt to lie optimally an honest user may inadvertently destroy value.
In fact the Myerson-Satterthwaite theorem applies to our setting,
even when there are just two users,\footnote{Here is a very rough sketch of the argument.
Consider our setting with transfers,
where
$-p_0^t, p_1^t$ are independently distributed in $[0, 1]$
and are the same in every round.
Because $H$ might contain only one user or might contain both,
a protocol that satisfies our guarantee must be incentive compatible
and must maximize collective welfare (i.e. be ex post efficient).
We can use the average size of the transfer
to sell an item valued at $p_0^t$ by the seller and $p_1^t$ by the buyer,
in a way that is incentive compatible and ex post efficient.
By Myerson-Satterthwaite, we can conclude that any protocol
satisfying our guarantees is not a Bayesian Nash equilibrium.

In the setting of bilateral exchange, our protocol recommends that the honest users
honestly state their willingness to pay for the item, allocate it optimally,
and then divide the surplus evenly. Evidently this is not a Nash equilibrium,
since a player can benefit by understating their surplus. This case captures
the essential and unavoidable reason that our protocol is not incentive-compatible.} to show 
that this is necessarily the case:
in general it is impossible to achieve our desired guarantees at a Bayesian Nash equilibrium.
We feel that our result is an attractive compromise given this impossibility result.

We could still achieve actual incentive compatibility
or approximate incentive compatibility in many interesting cases
where fundamental obstructions don't arise;
in general, we might hope to translate positive results
from the two-player iterated setting to our distributed setting.

\item[Asymmetric interactions:] Our protocols all require that interactions be symmetric
\emph{on average}, or else make use of transfers to symmetrize them.
This is a very strong condition that would be nice to weaken.

One generalization is to consider sets of interactions that are balanced,
so that each user has as many opportunities to give as to receive overall,
even though the pairwise interactions are asymmetrical.

A more difficult generalization is to consider interactions that are inherently imbalanced,
in the sense that some users systematically have more opportunities to give
than to receive.
In this case it is no longer possible to simultaneously maximize collective welfare
for every subgroup $H$.
We could instead try to achieve Pareto efficiency,
which is a weaker condition but still meaningful
if it holds for \emph{every} set $H$ simultaneously.
For example, we might
try to ensure that no group $H$ can benefit all of its members 
by adopting a different internal policy
and/or isolating itself from the users outside of $H$.

\item[Honest supermajorities:] Our guarantees are non-trivial even when
the set $H$ is a very large fraction of $\users$.
But in this case it may be able to obtain regret bounds that depend on $\of{N - \abs{H}}$
rather than on $N$.
In cases where 90\% or 99\% of users are honest,
this could amount to a large difference in performance.

\item[Sybilproofness:] Our system bounds the damage done by the introduction
of any dishonest user.
In some settings it is easy for an attacker to create large numbers of dishonest users
or \emph{sybils}.
Our bounds are not tight enough to discourage such an attack.

It might be possible to ensure that the introduction
of a dishonest user does \emph{no} damage to the honest users,
or that the maximum possible damage done by dishonest users is bounded.
For example, if transfers are available, each user could pay an upfront
deposit to cover any possible damage they might inflict (in particular,
a new user would need to pay a cost of at least $\rho$ before \emph{every} interaction
with an established user,
which would potentially be refunded if the interaction went well).
Each honest user would need to pay a manageable cost to join the system,
but the cost would be large enough that attackers can only recoup it by
becoming productive members of the community.\footnote{Impossibility results 
such as \cite{sybilproofaccounting} do apply to this problem,
but we are not aware of any that apply to our approach.}

\item[Composition and complex protocols:] Worst-case performance guarantees
allow our protocol to be used as a subroutine
without compromising its correctness.
Together with other building blocks,
our result might be used to design more complex protocols
in the same way that simple cryptographic primitives are used to build
more complex cryptographic systems.

\end{description}


\end{document}